\newtheorem{theorem}{Theorem}%[section]
\newenvironment{proof}{\paragraph{Proof:}}{\hfill$\square$}
\newtheorem{property}{Property}
\newtheorem{proposition}{Proposition}
\newtheorem{definition}{Definition}%[section]
\newtheorem{problem}{Problem}%[section]
\newcommand{\Output}{\State {\bf output}~}
\newcommand{\ltok}[1]{{#1}}
\newcommand{\flok}[1]{{#1}}
\newcommand{\DPM}{\texorpdfstring{\ensuremath{C^{+-}}}{C+-}}
\newcommand{\DPP}{\texorpdfstring{\ensuremath{C^{++}}}{C++}}
\newcommand{\APM}{{\tt A+-}}
\newcommand{\APP}{{\tt A++}}
\newcommand{\AMM}{{\tt A--}}
\newcommand{\adjonly}{mere}
\newcommand{\adjfull}{full}
\newcommand{\listonly}{\adjonly{}-listing}
\newcommand{\listfull}{\adjfull{}-listing}
\newcommand{\SAT}{NAE3SAT+}
\begin{document}

\title{Tailored vertex ordering for faster triangle~listing~in~large~graphs}

% auteurs commentés car double blind
%
% \newcommand{\sorbonne}{\textit{Sorbonne Universit\'e}\\\textit{LIP6, CNRS}\\\textit{75005 Paris, France}}
\newcommand{\sorbonne}{Sorbonne Université, CNRS, LIP6, F-75005 Paris, France}
\author{Fabrice Lécuyer\thanks{\sorbonne}
\and Louis Jachiet\thanks{LTCI, Télécom Paris, Institut Polytechnique de Paris}
\and Clémence Magnien$^*$
\and Lionel Tabourier$^*$}

\date{}

% \begin{keywords}
% graph algorithm, scalability, vertex ordering, real-world networks, pattern mining
% \end{keywords}

\maketitle

\begin{abstract}
% Context
Listing triangles is a fundamental graph problem with many applications, and large graphs require fast algorithms.
%
% Vertex ordering allows to orient the edges
Vertex ordering allows the orientation of edges from lower to higher vertex indices, 
% State of the art
%
and state-of-the-art triangle listing algorithms use this to accelerate their execution and to bound their time complexity. Yet, only basic orderings have been tested.
%
% Method
In this paper, we show that studying the precise cost of algorithms instead of their bounded complexity  leads to faster solutions.
We introduce cost functions that link ordering properties with the running time of a given algorithm.
We prove that their minimization is NP-hard and propose heuristics to obtain new orderings with different trade-offs between cost reduction and ordering time.
% Experiments
Using datasets with up to two billion edges, we 
% Results
show that
our heuristics accelerate the listing of triangles by an average of 38\% when the ordering is already given as an input,
and 16\% when the ordering time is included.\\
% (arxiv version: \href{https://arxiv.org/abs/2203.04774}{arxiv.org/abs/2203.04774})
(arxiv version: \href{https://doi.org/10.48550/arXiv.2203.04774}{doi.org/10.48550/arXiv.2203.04774})
\end{abstract}

% \setcounter{tocdepth}{2}
% \tableofcontents

% https://algo.inf.uni-tuebingen.de/wg2022/index.php?id=call-for-papers

\section{Introduction}

\subsection{Context and problem.}
% why is our research question interesting

Small connected subgraphs are key to identifying families of real-world networks~\cite{milo2002network}
%and to characterize their structure. 
and are used for descriptive or predictive purposes in various fields such as biology~\cite{sporns2004motifs,prvzulj2007biological}, linguistics~\cite{biemann2016network} or engineering~\cite{valverde2005network}.
%
%, and a large amount of work addresses their efficient mining.
%
In sociology in particular, characterizing networks with specific structural patterns has been a focus of interest for a long time, as it is even present in the works of early 20$^{th}$ century sociologists such as Simmel~\cite{simmel1908soziologie}.
Consequently, it is a common practice in social network analysis to describe interactions between individuals using local patterns~\cite{holland1976local,wasserman1994social}. 
Recently, the ability to count and list small size patterns efficiently allowed the characterization of various types of social networks on a large scale~\cite{choobdar2012comparison,charbey2019stars}. %\todo{CM : étoffer la motivation socio}
In particular, listing elementary motifs such as triangles and 3-motifs is a stepping stone in the analysis of the structure of networks and their dynamics~\cite{faust2010puzzle}.
For instance, the closure of a triplet of nodes to form a triangle is supposed to be a driving force of social networks evolution~\cite{leskovec2008microscopic,sintos2014using}.

The task of listing triangles may seem simple, but web crawlers and social platforms generate graphs that are so large that scalability becomes a challenge.
%
%an essential feature of mining algorithms.
%
Thus, a lot of effort has been dedicated to efficient in-memory triangle listing.
%Indeed, graphs with billions of edges can contain so many triangles that storing the output in memory is out of reach.
Note that methods exist for graphs that do not fit in main memory: some use I/O-efficient accesses to the disk~\cite{chu2011triangle}, while others partition the graph and process each part separately~\cite{arifuzzaman2019fast}.
However, such approaches induce a costly counterpart that makes them much less efficient than in-memory listing methods.
%
%\todo{CM : Dire que le streaming est moins efficace que le in memory ?}
%
% \lt{In some cases, the graph itself cannot fit into main memory, in which cases }
% \lj{
% Sinon c'est bizarre d'insister sur l'utilité du streaming juste avant de présenter 
% la contrib qui n'a rien à voir avec le streaming.
% }
%
It is also worth noticing that exact or approximate methods designed for triangle \textit{counting}~\cite{al2018triangle,gpu_triangle,gpu_order} can generally not be adapted to triangle \textit{listing}.

An efficient algorithm for triangle listing has been proposed early on in~\cite{chiba1985arboricity}.
Based on the observation that real-world graphs generally have a heterogeneous degree distribution, 
%with a small number of vertices having very high degrees.
later contributions~\cite{schank2005finding,latapy2008main} showed how ordering vertices by degree or core value accelerates the listing.
% \todo{CM : essayer d'expliquer un peu plus en quoi l'ordre joue sur le temps de calcul ?}
Such orderings create an orientation of edges so that nodes that are costly to process are not processed many times.
A unifying description of this method has been proposed in~\cite{ortmann2014} and it has been successfully extended to larger cliques~\cite{danisch2018listing,li2020cliques,uno2012cliques}.
However, only degree and core orderings have been exploited, but their properties are not specifically tailored for the triangle listing problem. 
Other types of orderings benefited other problems such as graph compression~\cite{boldi2004webgraph,dhulipala2016compressing}
or cache optimization~\cite{gorder,lee2019pre}.
The main purpose of this work is thus to find a general method to design efficient vertex orderings for triangle listing.

\subsection{Contributions.}
% what new solutions we bring to the problem
In this work, we show how vertex ordering directly impacts the running time of
the two fastest existing triangle listing algorithms. First, we introduce cost functions that relate the vertex ordering and the running time of each algorithm.
We prove that finding an optimal ordering that minimizes either of these costs is NP-hard.
Then, we expose a gap in the combinations of algorithm and ordering considered in the literature, and we bridge it with three heuristics producing orderings with low corresponding costs.
Our heuristics reach a compromise between their running time and the quality of the ordering obtained, in order to address two distinct tasks:
listing triangles with or without taking into account the ordering time.
Finally, we show that our resulting combinations of algorithm and ordering outperform state-of-the-art running times for either task.
We release an efficient open-source implementation\footnote{Open-source {\tt c++} implementation available at: \url{\codegit}} of all considered methods.

% \subsection{Outline}
%
Section~\ref{review} presents state-of-the-art methods to list triangles. In Section~\ref{method},
% we analyze the cost of these algorithms
we analyze the cost induced by a given ordering on these algorithms
and propose several heuristics to reduce it; the proofs of NP-hardness are in Appendix~\ref{appendix:dpm} and~\ref{appendix:dpp}.
%\todo[inline]{CM : Supprimé la partie sur les preuves de NP completeness qui faisait bizarre toute seule}
The experiments of Section~\ref{experiments} show that our methods are efficient in practice and improve the state of the art.

%%%%%%%%%%%%%%%%%%%%%%%%%%%%
%%%% State-of-the-art %%%%%%
%%%%%%%%%%%%%%%%%%%%%%%%%%%%

\subsection{Notations.}
\label{sec:notations}
We consider an unweighted undirected simple graph $G=(V,E)$ with $n=|V|$ vertices and $m=|E|$ edges.
The set of neighbors of a vertex $u$ is denoted $N_u = \{v, \{u,v\} \in E\}$, and its degree is $d_u=|N_u|$. 
An ordering $\pi$ is a permutation over the vertices that gives a distinct index
$\pi_u\in\llbracket 1, n\rrbracket$ to each vertex $u$. In the directed
acyclic graph (DAG) $G_\pi=(V, E_\pi)$,
%obtained by orienting each
%edge in increasing values of $\pi$:
for $\{u,v\}\in E$, $E_\pi$
contains $(u,v)$ if $\pi_u<\pi_v$, and $(v,u)$
otherwise.
In such a directed graph, the set $N_u$ of neighbors of $u$ is partitioned into its predecessors $N^-_u$ and successors $N^+_u$.
We define the indegree $d^-_u=|N^-_u|$ and the outdegree $d^+_u=|N^+_u|$; their sum is $d^-_u+d^+_u=d_u$.
A triangle of $G$ is a set of vertices $\{u,v,w\}$ such that $\{u,v\},\{v,w\},\{u,w\}\in E$.
A $k$-clique is a set of $k$ fully-connected vertices.
%Notice that in $G_\pi$ the edges $(m,b), (m,M)$ and $(b,M)$ are in $E_\pi$, where $m$ and $M$
%are respectively the minimum and maximum of $\pi_u, \pi_v, \pi_w$ and $b$ is the value in between.
%its equivalent in $G_\pi$ is the triplet $(u,v,w)$ such that $(u,v),(v,w),(u,w)\in E_\pi$.
%
% \lt{proposition de définition pour coreness et core ordering, vérifier?}
%
% Schank: The k-core of a graph is the largest node induced subgraph with minimum degree at least k. The core number c(v) of a node v is the maximum k of all cores it belongs to. The core number of a graph c is the maximal core number of all of its nodes
The coreness $c_u$ of vertex $u$ is the highest value $k$ such that $u$ belongs to a subgraph of $G$ where all vertices have degree at least $k$;
the core value or degeneracy $c(G)$ of $G$ is the maximal $c_u$ for $u\in V$.
A core ordering $\pi$ verifies $\pi_u\leq\pi_v \Leftrightarrow c_u \leq c_v$.
\ltok{Core value and core ordering can be
computed in linear time~\cite{batagelj2003kcore}.}

% \lj{C'est un peu du détail mais $N$ et $\pi$ sont des fonctions avec une notation $N_u$ ou $\pi_u$ tandis que $c$ avec une notation $c(u)$ ce qui est un peu surprenant.}

\section{State of the art}\label{review}

\subsection{Triangle listing algorithms.}

%To find the triangles in an undirected graph, algorithms navigate through the nodes until they find three that are connected. Yet, if a triangle $(u,v,w)$ is found, a later step may find the triangle $(v,w,u)$. This duplication is not desirable and several tricks exist to avoid it.

Ortmann and Brandes~\cite{ortmann2014} have identified two families of triangle listing algorithms: \textit{adjacency testing}, and\textit{ neighborhood intersection}. 
The former sequentially considers each vertex $u$ as a seed, and processes all pairs $\{v,w\}$ of its neighbors;
if they are themselves adjacent, $\{u,v,w\}$ is a triangle. 
%
% Notethat variants can take seed node v or w instead of u, with the notations of Fig-ure 1. This technique is called the node-iterator by Schank and Wagner [12],and is used in tree-lister by Itai and Rodeh [4]. It requires to test efficientlyif two nodes are adjacent, which is equivalent to testing if an element is in a list.
%
%
Algorithms
\texttt{tree-lister}~\cite{itai1978finding}, \texttt{node-iterator}~\cite{schank2005finding} and 
\texttt{forward}~\cite{schank2005finding}
belong to this category.
In contrast,
%
% \lj{C'est bizarre ce saut de paragraphe. J'aurais laissé dans le même en commençant la phrase par "In contrast,"}
%
the neighborhood intersection family methods sequentially considers each edge $(u,v)$ as a seed; each common neighbor $w$ of $u$ and $v$ forms a triangle $\{u,v,w\}$. 
Algorithms 
\texttt{edge-iterator}~\cite{schank2005finding},
\texttt{compact-forward}~\cite{latapy2008main} and 
{\tt K3}~\cite{chiba1985arboricity} belong to this category, as well as some algorithms that list larger cliques~\cite{kanade2004cliques,danisch2018listing,li2020cliques}.
%
%The efficiency of these methods depends on the number of steps required to compute the intersection of two lists of neighbours.

% \subsubsection{Node ordering}

In naive versions of both adjacency testing and neighborhood intersection, finding a triangle $(u,v,w)$ does not prevent from finding triangle $(v,w,u)$ at a later step.
The above papers avoid this unwanted redundancy by using an ordering, explicitly or not.
We use the framework developed in~\cite{ortmann2014}:
a total ordering $\pi$ is defined over the vertices, and the triple $(u,v,w)$ is only considered a valid triangle if $\pi_u<\pi_v<\pi_w$.
% \lj{Triplet plutôt que triple ?}
%
This guarantees that each triangle is listed only once:
as illustrated in Figure~\ref{fig:skeleton-triangle}, vertices in any triangle of the DAG $G_\pi$
appear in one and only one of 3 positions:
$u$ is first, $v$ is second, $w$ is third;
the same holds for edges: $L$ is the long edge, and $S_1$ and $S_2$ are the first and second short edges.
% This corresponds to creating a DAG $G_\pi$ and only allowing triangles with the shape shown in Figure~\ref{fig:skeleton-triangle}:
% unique roles for each node ($u$ is first, $v$ is second, $w$ is third) and edge (long edge $L$, short edges $S_1$ and $S_2$) guarantee that each triangle is listed only once.
% \lj{Je n'ai d'abord pas compris ce que c'était les long et short edges, je pense qu'il faut ajouter une réf à la figure.}
%
It leads to 3 variants of adjacency testing (seed vertex $v$ or $w$ instead of $u$) and of neighborhood intersection (seed edge $L$ or $S_2$ instead of $S_1$). % with the notations of Figure~\ref{fig:skeleton-triangle}

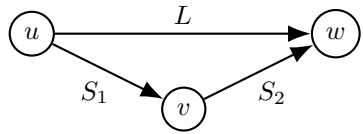
\begin{figure}[t]
  \begin{center} \begin{tikzpicture}[scale=1, transform shape,mystyle/.style={fill=green!60!black!100}]
  \def\arr{{Latex[length=3mm]}}
  \node [circle, thick, draw] (t1) at (-2,0) {$u$};
  \node [circle, thick, draw] (t2) at (0,-1) {$v$};  
  \node [circle, thick, draw] (t3) at (2,0) {$w$};
  \draw [-\arr, thick, below] (t1) edge node{$S_1\quad$} (t2);
  \draw [-\arr, thick, above] (t1) edge node{$L$} (t3);
  \draw [-\arr, thick, below] (t2) edge node{$\quad S_2$} (t3);
  \end{tikzpicture} \end{center}  
  \caption{\textbf{Directed triangle} with the unified notations proposed in~\protect\cite{ortmann2014}. The edges are directed according to an ordering $\pi$ such that $\pi_u<\pi_v<\pi_w$.}\label{fig:skeleton-triangle}
\end{figure}

Choosing the right data-structure is key to the performance of algorithms.
All triangle listing algorithms have to visit the neighborhoods of vertices.
Using hash table or binary tree to store them is very effective: they respectively allow for constant and logarithmic search on average.
However, because of high constants, they are reportedly slow in terms of actual running time~\cite{schank2005finding}.
A faster structure is the boolean array used in {\tt K3} for neighborhood intersection.
% , but it does not apply to adjacency testing.
It registers the elements of $N^+_u$ in a boolean table $B$ so that, for each neighbor $v$ of $u$, it is possible to check in constant time if a neighbor $w$ of $v$ is also a neighbor of $u$.
This is the structure used by the fastest methods~\cite{ortmann2014,danisch2018listing}.

\begin{figure*}[!ht]
% \vspace{-1cm}
\begin{center}
\begin{minipage}{0.49\textwidth}
\begin{algorithm}[H]
    \centering
    \caption{-- \APP{} (or \texttt{L+n})}\label{algo:PP}
    \begin{algorithmic}[1]
     \For {each vertex $v$} $B[v]\leftarrow$ False\EndFor
     \For {each vertex $w$}
      \For {$v\in N^{-}_w$} $B[v]\leftarrow$ True \EndFor
      \For {$u\in N^{-}_w$} \label{alg:PP:foru}
        \For {$v\in N^{+}_u$}
            \If{$B[v]$}
                \Output triangle$\{u,v,w\}$
            \EndIf
        \EndFor
      \EndFor
      \For {$v\in N^{-}_w$} $B[v]\leftarrow$ False \EndFor
     \EndFor
    % \EndFunction
    \end{algorithmic}
    \hrule\vspace{2mm}
    Complexity:\vspace{-3mm}
    $$ \Theta \Big( m + \sum_{(u,w)\in E_\pi} d^+_u \Big) = \Theta \Big(\sum _{u \in V} {d^+_u}^2 \Big) $$
    \vspace{-3mm}
\end{algorithm}
\end{minipage}
\hfill
\begin{minipage}{0.49\textwidth}
\begin{algorithm}[H]
    \centering
    \caption{-- \APM{} (or \texttt{S$_1$+n})}\label{algo:PM}
    \begin{algorithmic}[1]
     \For {each vertex $w$} $B[w]\leftarrow$ False\EndFor
     \For {each vertex $u$}
      \For {$w\in N^{+}_u$} $B[w]\leftarrow$ True \EndFor
      \For {$v\in N^{+}_u$} \label{alg:PM:forv}
        \For {$w\in N^{+}_v$}
            \If{$B[w]$}
                \Output triangle$\{u,v,w\}$
            \EndIf
        \EndFor
      \EndFor
      \For {$w\in N^{+}_u$} $B[w]\leftarrow$ False \EndFor
     \EndFor
    \end{algorithmic}
    \hrule\vspace{2mm}
    Complexity:\vspace{-3mm}
    $$ \Theta \Big( m + \sum_{(u,v)\in E_\pi} d^+_v \Big) = \Theta \Big( m + \sum _{v \in V} d^+_v d^-_v \Big) $$
    \vspace{-3mm}
\end{algorithm}
\end{minipage}
\end{center}
\vspace{-5mm}
\end{figure*}

In the rest of this paper, we therefore only consider triangle listing algorithms that use neighborhood intersection and a boolean array.
We present the two that we will study
in Algorithms~\ref{algo:PP} and~\ref{algo:PM} with the notations of Figure~\ref{fig:skeleton-triangle} for the vertices.
They initialize the boolean array $B$ to false (line 1), consider a first vertex (line 2) and store its neighbors in $B$ (line 3); then, for each of its neighbors (line 4), they check if their neighbors (line 5) are in $B$ (line 6), in which case the three vertices form a triangle (line 7). $B$ is reset (line 8) before continuing with the next vertex.
The Algorithm~\ref{algo:PP} corresponds to \texttt{L+n} in~\cite{ortmann2014};
we call it \APP{} \flok{because of the two \ltok{``+" (referring to out-degrees)} involved in its complexity.}
% \todo{CM : ici et ci-dessous pour A+- : dire pourquoi on l'appelle \APP{}, ça me semble pas clair à ce stade ?}
%
The Algorithm~\ref{algo:PM} corresponds to \texttt{S$_1$+n} in~\cite{ortmann2014};
we call it \APM{}\,%
% \footnote{Note that a third natural variant exists, which would be \AMM{} (or \texttt{S$_2$+n}), but we ignore it here since its complexity is equivalent to the one of \APP{}.}.
\footnote{A third natural variant exists: \AMM{} or \texttt{S$_2$+n}. We ignore it here since its complexity is equivalent to the one of \APP{}.}.
Their complexities are given in Property~\ref{complexity-apmp}. Since they depend on the indegree and outdegree of vertices, the choice of ordering will impact the running time of the algorithms.

\begin{property}[Complexity of \APP{} and \APM{}]
\label{complexity-apmp}
The time complexity of \APP{} is $\Theta (\sum _{u \in V} {d^+_u}^2 )$.
The time complexity of \APM{} is $\Theta \left( m + \sum _{v \in V} d^+_v d^-_v \right)$. 
\end{property}

\begin{proof}
In both algorithms, the boolean table $B$ requires $n$ initial values, $m$ set and $m$ reset operations, which is $\Theta(m)$ assuming that $n\in {\cal O}(m)$.
In \APP{}, a given vertex $u$ appears in the loop of line~\ref{alg:PP:foru} as many times as it has a successor $w$; every time, a loop over each of its successors $v$ is performed. 
In total, $u$ is involved in $\Theta({d^+_u}^2)$ operations.
Similarly, in \APM{}, a given vertex $v$ appears in the loop of line~\ref{alg:PM:forv} as many times as it has a predecessor $u$; every time, a loop over each of its successors $w$ is performed.
In total, $v$ is involved in $\Theta(d^+_v d^-_v)$ operations.
The term $m$ is omitted in the complexity of \APP{} as $\sum_{u \in V} {d^+_u}^2 \geq \sum _{u \in V} d^+_u=m$, but not in \APM{} as $\sum _{v \in V} d^+_v d^-_v$ can be lower than $m$.
%(or even null in a bipartite graph).
\end{proof}

\subsection{Orderings and complexity bounds.}
Ortmann and Brandes~\cite{ortmann2014} order the vertices by non-decreasing degree or core value.
In their experimental comparison, they test several algorithms as well as \APP{} and \APM{}, each with degree ordering, core ordering, and with the original ordering of the dataset.
They conclude that the fastest method is 
\APP{} with core or degree ordering:
core is faster to list triangles when the ordering is given as an input, and degree is faster when the time to compute the ordering is also included.

Danisch \textit{et al.}~\cite{danisch2018listing} also use core
ordering in the more general problem of listing $k$-cliques. For
triangles ($k=3$), their algorithm is equivalent to \APM{}, and they show that using core ordering outperforms the methods
of~\cite{chiba1985arboricity,latapy2008main,kanade2004cliques}.

With these two orderings, it is possible to obtain upper-bounds for the time complexity in terms of graph properties.
Chiba and Nishizeki~\cite{chiba1985arboricity} show that \texttt{K3} with degree ordering has a complexity in $ \mathcal{O} (m \cdot \alpha(G))$, where $\alpha(G)$ is the arboricity of graph $G$. 
With core ordering, \texttt{node-iterator-core}~\cite{schank2005finding} and \texttt{kClist}~\cite{danisch2018listing} have complexity $\mathcal{O} (m \cdot c(G))$, where $c(G)$ is the core value of graph $G$.  
These bounds are considered equal in~\cite{ortmann2014}, following the proof in~\cite{zhou1999edge} that $\alpha(G) \leq c(G) \leq 2 \alpha(G)-1$.
However, we focus in this work on the complexities expressed in Algorithms~\APP{} and~\APM{} as we will see that they describe the running time more accurately.

%Finally, the same algorithm used with a degree ordering can be proved to have a time complexity in  $ \mathcal{O} (m \cdot \eta (G))$, where $ \eta(G) $ is the $h$-index of graph $G$~\cite{li2020ordering}.
%It is known that $ \alpha(G) \leq \eta(G) \leq \sqrt{2m} $, which does not give a good upper bound to the complexity, but $ \eta(G) $ is often low in practice on real-world graphs.

%%% Local Variables:
%%% mode: latex
%%% TeX-master: t
%%% End:
 % input

%%%%%%%%%%%%%%%%%%%%%%%%%%%%%%%%%%%%%%

% \subsection{Complexity}
% PP has complexity Dpp and PM have complexity Dpm. They depend on the ordering. But in previous works, both are upper-bounded by $am$ ($a$ is the arboricity) or $cm$ ($c$ is the core-value), which are properties of the graph. This bound is not necessarily tight, as we show in the next section.

%%%%%%%%%%%%%%%%%%%%%%%%%%%%
%%%%% Proposed method %%%%%%
%%%%%%%%%%%%%%%%%%%%%%%%%%%%

\section{New orderings to reduce the cost of~triangle~listing}\label{method}

%\cm{Mettre plus en avant d\`es le d\'ebut la formalisation du lien entre le co\^ut et les algos  et le fait que l'\'etat de l'art ne le faisait pas et/ou faisait des choses bizarres (core avec dpm)}

\subsection{Formalizing the cost of triangle listing algorithms.}

In this section, we discuss how to design vertex orderings to reduce the cost of triangle listing algorithms.
For this purpose, we introduce the following costs that appear in the complexity formulas of Algorithms~\ref{algo:PP} and~\ref{algo:PM}.
Recall that the initial graph is undirected and that the orientation of the edges is given by the ordering $\pi$, which partitions neighbors into successors and predecessors.

\begin{definition}[Cost induced by an ordering]
Given an undirected graph~$G$, the costs \DPP{} and \DPM{}  induced by a vertex ordering $\pi$ are defined by:
$$
\DPP (\pi)=\sum_{u\in V} d_u^+d_u^+
\qquad\qquad 
\DPM (\pi)=\sum_{u\in V} d_u^+d_u^-
$$
% $Dpm$ stands for "degree plus/minus" and \DPP for "degree plus/plus".
\end{definition}

The fastest methods in the state of the art are \APP{} with core or degree ordering~\cite{ortmann2014},
and \APM{} with core ordering~\cite{danisch2018listing}.
The intuition of both orderings is that high degree vertices are ranked after most of their neighbors in $\pi$ so that their outdegree in  $G_\pi$ is lower.
This reduces the cost \DPP{}, which in turn reduces the number of operations required to list all the triangles as well as the actual running time of \APP{}. In~\cite{ortmann2014}, it is mentioned that core ordering performs well with \APM{} as a side effect.

% \begin{table}
% \caption{Combination of orderings and algorithms}\label{table:dppm}
% \centering
% \begin{tabular}{|c|c|c|}
% \hline
% Ordering &  \texttt{PP} (cost \DPP) & \texttt{PM} (cost \DPM)\\
% \hline
% Dpp-reducing (core, deg) & \cite{ortmann2014} & \cite{danisch2018listing}\\
% Dpm-reducing (section~\ref{dpm-orderings}) & -  & our contribution\\
% \hline 
% \end{tabular}
% \end{table}

To our knowledge, no previous work has designed orderings with a low \DPM{} cost and used them with \APM{}.
We will show that such orderings can lower the computational cost further.
Yet, optimizing \DPM{} or \DPP{} is computationally hard because of Theorem~\ref{thm:dpmp}:

\begin{theorem}[NP-hardness]\label{thm:dpmp}
Given a graph $G=(V,E)$, it is NP-hard to find an ordering~$\pi$ on $V$ that minimizes $\text{\DPM}(\pi)$ or that minimizes $\text{\DPP}(\pi)$.
\end{theorem}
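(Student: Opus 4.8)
The plan is to prove NP-hardness by reduction from a known NP-hard problem, and the natural candidate is \SAT{} (monotone not-all-equal 3-SAT), which the paper seems to anticipate given the macro it defines. Both costs $\text{\DPP}(\pi)$ and $\text{\DPM}(\pi)$ are driven by how an ordering splits each vertex's neighborhood into predecessors and successors, so the key observation is that a gadget whose optimal ordering must make a binary choice — ``this block of vertices comes before that block'' — can encode a boolean variable. I would first build, for each variable $x_i$, a small symmetric gadget (for instance two cliques or two large bipartite-like bundles $A_i$ and $B_i$ of equal size joined in a way that is invariant under swapping $A_i \leftrightarrow B_i$) so that in any cost-minimizing ordering the whole of $A_i$ precedes the whole of $B_i$ or vice versa, and no interleaving is ever optimal; this gives the truth value of $x_i$. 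Then, for each clause, attach a constant-size gadget wired to the literal gadgets so that its contribution to the cost takes its minimum exactly when the clause is not-all-equal satisfied, i.e.\ when the three attached literal blocks are not all on the same side. Summing contributions, the total cost is below a threshold $T$ iff the formula is NAE-satisfiable.

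The concrete steps, in order, would be: (1) fix the reduction source as \SAT{} and recall its NP-hardness; (2) describe the variable gadget and prove the ``no interleaving'' lemma — that any ordering which interleaves $A_i$ and $B_i$, or which places clause-gadget vertices in the ``wrong'' position relative to them, can be strictly improved by a local swap, so WLOG an optimal ordering is in ``canonical form'' determined by a truth assignment; (3) compute, in canonical form, the exact value of $\text{\DPP}$ (resp.\ $\text{\DPM}$) as a function of the assignment, isolating the per-clause term; (4) choose the threshold $T$ and verify the equivalence ``$\text{cost} \le T$ iff NAE-satisfiable''; (5) check the construction is polynomial-size and that the two cost functions can be handled with the same skeleton (possibly tweaking gadget multiplicities, since $\text{\DPP}$ weights $d_u^+$ quadratically while $\text{\DPM}$ weights the product $d_u^+ d_u^-$, so the arithmetic differs even if the combinatorial core is shared). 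Because the theorem bundles two claims, I expect the paper actually splits this across the two appendices it references (\ref{appendix:dpm} and \ref{appendix:dpp}), using a common gadget idea with separate bookkeeping; I would mirror that.

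The main obstacle is the ``no interleaving'' / canonical-form lemma: one must show that partial interleavings, and in particular orderings where clause-gadget vertices sit \emph{between} $A_i$ and $B_i$, are never strictly optimal, and that the cheapest way to realize a satisfied clause genuinely beats every unsatisfied configuration by a margin large enough to survive summation over all gadgets. This requires getting the gadget sizes and edge multiplicities right so that the ``variable'' part of the cost dominates any slack in the ``clause'' part, while still keeping the clause terms sensitive enough to detect the not-all-equal condition; balancing these two scales — making the variable decision rigid without washing out the clause signal — is the delicate part. A secondary nuisance is that $\text{\DPM}$ can be smaller than $m$ and behaves non-monotonically under local swaps in a way $\text{\DPP}$ does not, so the swap-improvement arguments must be redone carefully for that case rather than quoted from the $\text{\DPP}$ analysis.
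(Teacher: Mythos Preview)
Your proposal is only half on track, and the half that is on track is over-engineered.

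For \DPM{}, the paper does reduce from \SAT{} as you guessed, but the construction is far lighter than what you sketch: there are no bundles $A_i$, $B_i$ and no ``no interleaving'' lemma. Each variable $x_i$ is a \emph{single} vertex $X_i$, each clause $c_j$ is a triangle on three literal vertices $L_j^1,L_j^2,L_j^3$, and $X_i$ is joined to every literal vertex representing $x_i$. The middle vertex of each triangle inevitably contributes $2$ to \DPM{}, giving a floor of $2m$; hitting that floor forces every $X_i$ to have only predecessors or only successors (encoding true/false) and forces each clause to have one true and one false literal, which is exactly NAE-satisfiability. No rigidity argument or scale-separation is needed --- the threshold is just $2m$. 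Your bundle idea could presumably be made to work, but it creates precisely the ``balancing two scales'' headache you flag as the main obstacle; the paper sidesteps that entirely.

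For \DPP{}, your expectation that ``the two cost functions can be handled with the same skeleton'' is wrong, and this is the genuine gap. The paper does \emph{not} reduce \DPP{} from \SAT{}; it reduces from Set Cover, and it does so through an intermediate \emph{weighted} version of the problem (vertex $u$ contributes $(|succ_\prec(u)|+w(u))^2$). It first shows that weights can be simulated in the unweighted problem by attaching a carefully designed gadget $L_d$ (a clique plus a layer of intermediate vertices) whose optimal elimination order is analyzed via a ``multiset of costs'' and a notion of \emph{marginal cost}. Then the weighted problem is reduced from Set Cover using a graph with a hub vertex $A$, set vertices $s_j$, element vertices $e_i$, and small bridges $a^i_j,b^i_j,c^i_j$; the argument tracks marginal cost rather than raw cost and shows that a marginal cost of at most $k$ forces a size-$k$ cover. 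None of this resembles a NAE-3SAT gadget with symmetric literal blocks. The quadratic penalty in \DPP{} makes ``put $X_i$ first or last'' a non-symmetric choice (last is always preferred for $X_i$ itself), so the clean binary encoding you rely on for \DPM{} does not carry over; that is presumably why the paper abandons \SAT{} for this half.
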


\begin{proof}
For the hardness of \DPM{}, a proof was already known from~\cite{stack2017dpm} but never published as far as we know; 
we give a new and simpler proof in Appendix~\ref{appendix:dpm}.
We prove the result for \DPP{} in Appendix~\ref{appendix:dpp}.
\end{proof}

\subsection{Distinguishing two tasks for triangle listing.}

Triangle listing typically consists of the following steps: loading a graph, computing a vertex ordering, and listing the triangles.
Time measurements
in~\cite{latapy2008main,danisch2018listing,li2020cliques} only take
the last step into account, while \cite{schank2005finding,ortmann2014}
also include the other steps. We therefore address two distinct tasks in our study: 
we call {\bf \listonly{}} the task of listing the triangles of an already loaded graph with a given vertex ordering; 
we call {\bf \listfull{}} the task of loading a graph, computing a vertex ordering, and listing its triangles.

% \todo{CM : ci-dessous problème avec le fait qu'on parle
% de check alors qu'on ne l'a pas introduit. Déplacer ce paragraphe
% et/ou mettre core à la place de check ? Cf
% aussi remarque en début de secction C}
In the rest of the paper, we use the notation \textit{task-order-algorithm}: for instance, \adjonly{}-core-\APM{} refers to the \listonly{} task with core ordering and algorithm \APM{}.
Using this notation,
the fastest methods identified in the literature are \adjonly{}-core-\APM{} in~\cite{danisch2018listing}, \adjonly{}-core-\APP{} and \adjfull{}-degree-\APP{} in~\cite{ortmann2014}. We use all three methods as benchmarks in our experiments of Section~\ref{experiments}.

With \listonly{}, the ordering time is not taken into account, which allows to spend a long time to find an ordering with low cost. On the other hand, \listfull{} 
%also includes the ordering time, which 
favors quickly obtained orderings even if their induced cost is not the lowest. For this reason, there is a time-quality trade-off for cost-reducing heuristics.

% This outlines two distinct problems of listing:
% \begin{definition}[full-listing]
% A full-listing is an algorithm that takes a graph represented by a list of edges and outputs a list of triangles. State of the art is \texttt{PP} with degree ordering~\cite{ortmann2014}.
% \end{definition}
% \begin{definition}[only-listing]
% An only-listing is an algorithm that takes a graph represented by adjacency lists with a particular node ordering and outputs a list of triangles. State of the art is \texttt{PM} with core ordering~\cite{danisch2018listing}.
% \end{definition}

\subsection{Reducing \DPM{} along a time-quality trade-off.} \label{dpm-orderings}

% \todo{CM : je trouve qu'ici après la section B. on a un peu perdu le fil sur pourquoi on ne cherche à optimiser que C+-. Trouver un moyen d'insister plus dessus en fin de section A., déplacer des choses de la section B. plus tard et/ou réinsister dessus en début de section ?}
We remind that two efficient algorithms are identified in the literature for triangle listing
%
%\fl{Recall that the literature identified two efficient algorithms for triangle listing 
%
(see Algorithms~\ref{algo:PP} and~\ref{algo:PM}). Their number of operations are respectively \DPP{} and \DPM{}. However, the orderings that have been considered (degree and core) induce a low \DPP{} cost, but not necessarily a low \DPM{} cost.

Our goal here is therefore to design a
procedure that takes a graph as input and produces an ordering $\pi$
with a low induced cost $\DPM(\pi)$. Because of Theorem~\ref{thm:dpmp}, finding an optimal solution 
is not realistic
% does not seem achievable in practice  
for graphs with millions of edges.
We therefore present three heuristics 
aiming at reducing the \DPM{} value,
%to create orderings with low \DPM{} costs,
exploring the trade-off between quality in terms of \DPM{} and ordering time. 
% Thus, this
% allows to choose a quickly-obtained ordering with low \DPM{}
% to list all triangles fast when the time needed to compute the ordering is taken into account
% or an ordering with very low \DPM{} that takes
% some time to compute in the case where the order is given as an input to the listing task
% (for instance when it is needed to list them several times).

\subsubsection{\textit{Neigh} heuristic.} 

%
% "A greedy algorithm is any algorithm that follows the problem-solving heuristic of making the locally optimal choice at each stage." 
% \begin{algorithm}[b]

%%% balise algo Neigh %%%

\begin{algorithm}[b]
\caption{Neighborhood optimization (\textit{Neigh} heuristic)}
\label{algo:dpmopt}
\renewcommand{\algorithmicrequire}{\textbf{Input:}}
\renewcommand{\algorithmicuntil}{\textbf{while}}
\begin{algorithmic}[1]
\Require graph $G$, initial ordering $\pi$,  threshold $\epsilon\geq 0$
\Repeat
\State $C_0=\DPM{}(\pi)$
\For {each vertex $u$ of $G$}\label{alg:neigh:foru}
    \State sort $N_u$ according to $\pi$ \label{alg:neigh:sort}
 % \State $p_0=$ position of $u$ in its neighborhood\label{alg:neigh:p0}
    \State $p_*=\textrm{argmin}_{p \in \llbracket 0,d_u\rrbracket }\ \{ \DPM{}(p) \}$\label{alg:neigh:pbest}
    %\Comment Find best position $p_*$ for $u$ 
    \State update ordering $\pi$ to put $u$ in position $p_*$\label{alg:neigh:update}
 \EndFor
\Until{$\DPM{}(\pi) < (1-\epsilon) \cdot C_0$}\label{alg:neigh:until}
\end{algorithmic}
\end{algorithm}

%\begin{figure}[b]
%\begin{algorithmic}[1]
%\ALGO Neigh
%\REQUIRE graph $G$, initial ordering $\pi$,  threshold $\epsilon\geq 0$
%\REPEAT
%\STATE $C_0=\DPM{}(\pi)$
%\FOR {each vertex $u$ of $G$}\label{alg:neigh:foru}
%    \STATE sort $N_u$ according to $\pi$ \label{alg:neigh:sort}
%    \STATE $p_0=$ position of $u$ in its neighborhood\label{alg:neigh:p0}
%    \STATE $p_*=\textrm{argmin}_p\ \{ \DPM{}(p) \}$\label{alg:neigh:pbest}
%    \STATE update ordering $\pi$ to put $u$ in position $p_*$\label{alg:neigh:update}
% \ENDFOR
%\UNTIL{$\DPM{}(\pi) < (1-\epsilon) \cdot C_0$}\label{alg:neigh:until}
%\end{algorithmic}
%\caption{Neighborhood optimization (\textit{Neigh} heuristic)}\label{algo:dpmopt}
% \end{algorithm}
%\end{figure}

% \vspace{-1cm}
%
We define the \textit{neighborhood  optimization} method, a greedy reordering where each vertex is placed at the optimal index with respect to its neighbors, \ltok{as illustrated in Figure~\ref{fig:neigh_heur}}.
First, notice that changing an index $\pi_u$ only affects $\DPM{}(\pi)$ if the position of $u$ with respect to at least one of its neighbors changes;
otherwise the in- and outdegrees of all vertices remain unchanged.
Starting from any ordering $\pi$,  the algorithm described in Algorithm~\ref{algo:dpmopt} considers each vertex $u$ one by one (line~\ref{alg:neigh:foru}) and, for each
$p\in \llbracket 1,d_u\rrbracket$, it computes $\DPM(p)$, the
value of $\DPM$ when $u$ is just after its $p$-th neighbor in $\pi$, as well as $\DPM(0)$ when $u$ is before all its neighbors.
The position $p_*$ that induces the lowest value of $\DPM$ is
selected (line~\ref{alg:neigh:pbest}) and the ordering is updated (line~\ref{alg:neigh:update}). The process is repeated until $\DPM$ reaches a local minimum, or
until the relative improvement is under a threshold $\epsilon$ (last
line). The resulting $\pi$ induces a low $\DPM$ cost.
% It is greedy because the best improvement is selected repeatedly, and local because each node is only compared to its direct neighbours in order to choose its new position in the global ordering.
%

% \todo[inline]{CM : Cidessus problème avec les indices qu'il me semblait
% avoir corrigé. c'est moi qui me goure ? V'erifier la concordance avec les indices rajoutés dans l'algo}

\begin{figure}[t]

\begin{center}
    
\begin{tikzpicture}[thick,scale=0.5] %, every node/.style={transform shape}

%%% small graph

\node[shape=circle,minimum size=6mm,draw=black,ultra thick] (A) at (2,7) {$a$};
\node[shape=circle,minimum size=6mm,draw=black] (B) at (4.5,6) {$b$};
\node[shape=circle,minimum size=6mm,draw=black] (C) at (7,7) {$c$};
\node[shape=circle,minimum size=6mm,draw=black] (D) at (-0.5,5) {$d$};
\node[shape=circle,minimum size=6mm,draw=black] (E) at (2,3) {$e$};
\node[shape=circle,minimum size=6mm,draw=black,scale=0.9] (F) at (4.5,4) {$f$} ;
\node[shape=circle,minimum size=6mm,draw=black] (G) at (7,3) {$g$} ;
\path [-latex](A) edge (D);
\path [-latex](A) edge (B);
\path [-latex](A) edge (E);
\path [-latex](B) edge (C);
\path [-latex](B) edge (E);
\path [-latex](B) edge (G);
\path [-latex](C) edge (G);
\path [-latex](D) edge (E);
\path [-latex](E) edge (F);
\path [-latex](F) edge (G);
   
%%% small table

\draw (0,0) -- (7,0) -- (7,1) -- (0,1) -- (0,0);
\draw[ultra thick] (0,0) -- (0,1) -- (1,1) -- (1,0) -- (-0.05,0);
\draw (2,0) -- (2,1);
\draw (3,0) -- (3,1);
\draw (4,0) -- (4,1);
\draw (5,0) -- (5,1);
\draw (6,0) -- (6,1);
\node[text width=1cm, font=\normalsize] at (1.33,0.4) {$a$};   
\node[text width=1cm, font=\normalsize] at (2.33,0.49) {$b$};   
\node[text width=1cm, font=\normalsize] at (3.33,0.4) {$c$};   
\node[text width=1cm, font=\normalsize] at (4.33,0.49) {$d$}; 
\node[text width=1cm, font=\normalsize] at (5.33,0.4) {$e$};   
\node[text width=1cm, font=\normalsize] at (6.33,0.42) {$f$};   
\node[text width=1cm, font=\normalsize] at (7.33,0.32) {$g$};   

\path [-stealth, very thick] (0.5,1) edge [bend left=45] (5,1);

\end{tikzpicture}
\end{center}

\hspace{2cm}
\begin{center}
\begin{tikzpicture}[thick,scale=0.5] 

%%% small graph

\node[shape=circle,minimum size=6mm,draw=black,ultra thick] (A) at (2,7) {$a$};
\node[shape=circle,minimum size=6mm,draw=black] (B) at (4.5,6) {$b$};
\node[shape=circle,minimum size=6mm,draw=black] (C) at (7,7) {$c$};
\node[shape=circle,minimum size=6mm,draw=black] (D) at (-0.5,5) {$d$};
\node[shape=circle,minimum size=6mm,draw=black] (E) at (2,3) {$e$};
\node[shape=circle,minimum size=6mm,draw=black,scale=0.9] (F) at (4.5,4) {$f$} ;
\node[shape=circle,minimum size=6mm,draw=black] (G) at (7,3) {$g$} ;
\path [-latex](D) edge (A);
\path [-latex](B) edge (A);
\path [-latex](E) edge (A);
\path [-latex](B) edge (C);
\path [-latex](B) edge (E);
\path [-latex](B) edge (G);
\path [-latex](C) edge (G);
\path [-latex](D) edge (E);
\path [-latex](E) edge (F);
\path [-latex](F) edge (G);

%%% small table

\draw (0,0) -- (7,0) -- (7,1) -- (0,1) -- (0,0);
\draw[ultra thick] (4,0) -- (4,1) -- (5,1) -- (5,0) -- (4-0.05,0);
\draw (1,0) -- (1,1);
\draw (2,0) -- (2,1);
\draw (3,0) -- (3,1);
\draw (4,0) -- (4,1);
\draw (5,0) -- (5,1);
\draw (6,0) -- (6,1);
\node[text width=1cm, font=\normalsize] at (1.33,0.49) {$b$};   
\node[text width=1cm, font=\normalsize] at (2.33,0.4) {$c$};   
\node[text width=1cm, font=\normalsize] at (3.33,0.49) {$d$};
\node[text width=1cm, font=\normalsize] at (4.33,0.4) {$e$};   
\node[text width=1cm, font=\normalsize] at (5.33,0.4) {$a$};   
\node[text width=1cm, font=\normalsize] at (6.33,0.42) {$f$};   
\node[text width=1cm, font=\normalsize] at (7.33,0.32) {$g$};  

\end{tikzpicture}
    
\end{center}
    % \vspace{-5mm}
    \caption{\textbf{Example of update in the \textit{Neigh} heuristic}: vertex $a$ is moved to a  position among its neighbors that induces the lowest cost. The tables indicates how the ordering is updated. The edge in the DAG are reoriented accordingly. Here, the ordering at the top has $\DPM{} = 9$ while the ordering at the bottom has $\DPM{} = 6$. For this graph, the optimal \DPM{} cost is 3 (with ordering $e,g,f,a,c,d,b$). 
    %Besides, for a degree ordering $c,d,f,a,g,b,e$ with $\DPP{}=18$ and $\DPM{}=7$, the corresponding \textit{Split} is $e,g,f,c,d,a,b$ with $\DPP{}=27$ and $\DPM{}=4$.
    }
    \label{fig:neigh_heur}
\end{figure}
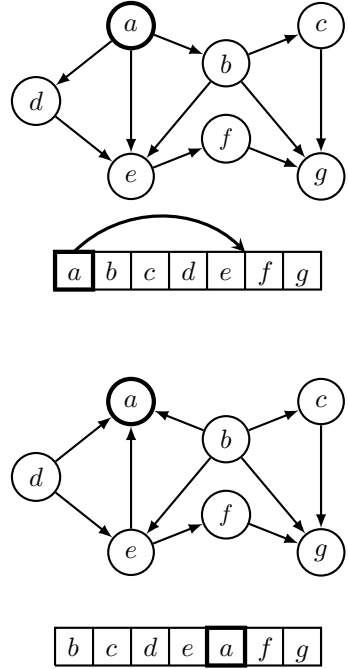

 % input

%  \lj{calculer le bénéfice de placer $u$ à cette place est non trivial il faudrait peut-être expliciter ?}
% \lj{Le paragraphe ci-dessous simplifie $\log(d_u)$ en $\log(n)$ et je suis d'accord que c'est autorisé car $log(d_u)\leq log(n)$ mais n'est-ce pas une majoration un peu violente ?
% En pratique la vraie complexité c'est $O(m\times avg(\log(d_u)) = O(m\log avg(d_u))$ ce qui pour des graphes un peu creux  est sensiblement plus petit que $O(m\log n)$.
% (Par ailleurs, c'est une subtilité algorithmique que l'on ne veut pas dans le papiers mais $n$ entiers peuvent se trier en $O(n \log \log n)$)}
For a vertex $u$, sorting the neighborhood according to $\pi$ takes
$\mathcal{O}(d_u\log d_u)$ operations; finding the best position
takes $\Theta(d_u)$ because it only depends on the values $d_v^+$ and $d_v^-$ of each neighbor $v$ of $u$. With a linked list, $\pi$ is updated in constant time. If $\Delta$ is the highest degree in the graph, one iteration over all the vertices thus takes
$\mathcal{O}(m\log \Delta)$, which leads to a total complexity 
$\mathcal{O}(Im\log \Delta)$ if the improvement threshold $\epsilon$ is reached after
$I$ iterations. Notice that on all the tested datasets the process reaches $\epsilon=10^{-2}$ after less than ten iterations.

This heuristic has several strong points:
it can be used for other objective functions, for instance \DPP{};
it is greedy, so the cost keeps improving until the process stops;
if the initial ordering already induces a low \DPM{} cost, the heuristic can only improve it; 
it is stable in practice, which means that starting from several random orderings give similar final costs;
and we show in Section~\ref{experiments} that it allows for the fastest \listonly{}.
% lowest \DPM{} value among all considered orderings.

In spite of its log-linear complexity, this heuristic can take longer than the actual task of listing triangles in practice, which is an issue for the \listfull{} task.
We therefore propose the following faster heuristics in the case of the \listfull{} task.

\subsubsection{\textit{Check} heuristic.}
This heuristic is inspired by core ordering, where vertices are repeatedly selected according to their current degree~\cite{batagelj2003kcore}.
It considers all vertices by decreasing degree and
\textit{checks} whether it is better to put a vertex at the beginning or at the end of the ordering.
More specifically, $\pi$ is obtained as follows: before placing vertex $u$, let $V_b$ (resp. $V_e$) be the vertices that have been placed at the beginning (resp. at the end) of the ordering, and $V_?$ those that are yet to place.
The neighbors of $u$ are partitioned in
$N_b=N_u\cap V_b$, $N_e=N_u\cap V_e$ and $N_?=N_u\cap V_?$.
We consider two options to place $u$: either just after the vertices in $V_b$ ($\pi_u=|V_b|+1$), or
just before the vertices in $V_e$ ($\pi_u=n-|V_e|$).
In either case, $u$ has all vertices of $N_b$ as predecessors, and all vertices of $N_e$ as successors.
In the first case, vertices in $N_?$ become successors, which induces a \DPM{} cost $C_b=|N_b|\cdot (|N_e|+|N_?|)$.
In the second, the cost is $C_e=(|N_b|+|N_?|)\cdot |N_e|$.
The option with the smaller cost is selected.
Sorting the vertices by degree requires $\mathcal{O}(n)$ steps with bucket sort.
% \lj{Trier $n$ entiers compris entre $0$ et $O(n)$ prend $O(n)$ opérations.}
Maintaining the sizes of $N_b$, $N_e$, $N_?$ for each vertex requires one update for each edge.
Therefore, the complexity is $\mathcal{O}(m+n)$, or $\mathcal{O}(m)$ assuming that $n \in {\cal O}(m)$.

\subsubsection{\textit{Split} heuristic.}
Finally, we propose a heuristic that is faster to achieve but compromises on the quality of the resulting ordering.
Degree ordering has been identified as the best solution for \listonly{} with algorithm-\APP{}~\cite{ortmann2014}. We adapt it for \DPM{} by \textit{splitting} vertices %of high degree 
alternatively at the beginning and at the end of the ordering $\pi$.
More precisely, a non-increasing degree ordering $\delta$ is computed, then the vertices are split according to their parity: if $u$ has index $\delta_u=2i+1$ then $\pi_u=i+1$; if $\delta_u=2i$, then $\pi_u=n+1-i$.
Thus, high degree vertices will have either few predecessors or few successors, which ensures a low \DPM{} cost.
With the graph of Figure~\ref{fig:neigh_heur}, supposing that we start from the non-decreasing degree ordering $(e,b,g,a,f,d,c)$, which has $\DPM{}=7$,
the \textit{Split} method leads to $(e,g,f,c,d,a,b)$, which has $\DPM{}=4$.
% \todo{CM : rajouter ici l'exemple de ce que donne split sur le graphe de la figure 5, par rapport à l'ordre sur les degrés ?}
% The final ordering $\pi$ is a
% concatenation of $\pi_b$ and $\pi_e$, where $\pi_b$ contains odd nodes
% sorted by decreasing degree, and $\pi_e$ contains even nodes sorted by
% increasing degree.  
%
The complexity of this method is in $\mathcal{O}(n)$ like the degree ordering.
% \lj{Même remarque sur le tri de $n$ entiers entre $0$ et $n$ qui prend un temps $O(n)$.}

%%%%%%%%%%%%%%%%%%%%%%%%%%%%
%%%%%%% Experiments %%%%%%%%
%%%%%%%%%%%%%%%%%%%%%%%%%%%%

\section{Experiments}\label{experiments}
\subsection{Experimental setup.}\label{exp:setup}
\subsubsection{Datasets.}\label{datasets} We use the 12 real-world graphs
%and one Erd\H{o}s-R\'enyi random graph, as 
described in Table~\ref{table:datasets}.
Loops have been removed and the directed graphs have been
transformed into undirected graphs by keeping one edge when one existed
in either or both directions.
%The bigger datasets are used only for
%the comparison of our methods with state of the art, while the smaller
%ones are also used to show how the heuristics work.

\begin{table}[h!]
\newcommand{\dataWeb}{$\bigstar$}
\newcommand{\dataSoc}{$\blacktriangle$}
\newcommand{\dataCit}{$\blacksquare$}
\caption{\textbf{Datasets used for the experiments}, ranked by number of edges. They represent either web networks \dataWeb{}, social networks \dataSoc{} or citation networks \dataCit{}.}\label{table:datasets}

\centering
\begin{scriptsize}
\begin{tabular}{r|c|c|c}
	dataset [source] & vertices & edges & triangles\\
	\hline
	skitter \dataWeb \cite{dataset-snap} &	1,696,415	&	11,095,298	&	28,769,868\\
	patents \dataCit \cite{dataset-snap} &	3,774,768	&	16,518,947	&	7,515,023\\
	baidu \dataWeb \cite{dataset-networkRepository} &	2,141,301	&	17,014,946	&	25,207,196\\ % \cite{dataset-konect,
	pokec \dataSoc \cite{dataset-snap} &	1,632,804	&	22,301,964	&	32,557,458\\
	socfba \dataSoc \cite{dataset-networkRepository} &	3,097,166	&	23,667,394	&	55,606,428\\
	LJ \dataSoc \cite{dataset-snap} &	4,036,538	&	34,681,189	&	177,820,130\\
	wiki \dataWeb \cite{dataset-snap} &	2,070,486	&	42,336,692	&	145,707,846\\
	orkut \dataSoc \cite{dataset-snap} &	3,072,627	&	117,185,083	&	627,584,181\\
%	\hline
	it \dataWeb \cite{boldi2004webgraph} &	41,291,318	&	1,027,474,947	&	48,374,551,054\\
	twitter \dataSoc \cite{boldi2004webgraph} &	41,652,230	&	1,202,513,046	&	34,824,916,864\\
% 	ER	& random &	51,200,000	&	1,599,940,123	&	40,433\\
	friendster \dataSoc \cite{dataset-snap} &	124,836,180	&	1,806,067,135	&	4,173,724,142\\
	sk \dataWeb \cite{boldi2004webgraph} &	50,636,151	&	1,810,063,330	&	84,907,041,475\\
% 	\hline
% 	uk	& web \cite{boldi2004webgraph} &		&		&	\\
% 	gsh	& web \cite{boldi2004webgraph} &		&		&	\\
\end{tabular}
\end{scriptsize}
\end{table}

\subsubsection{Software and hardware.}
We release a uniform open-source implementation\,\footnote{\url{\codegit}}
of \APP{} and \APM{} algorithms, as well as the different ordering strategies that we discussed in Section~\ref{method}.
Our implementation allows to run either algorithm in parallel, which is possible because each iteration of the main loop is independent from the others.
Among orderings however, only degree and \textit{Split} are easily parallelizable; to be consistent, we use a single thread to compare the different methods.
%
% Moreover notice that, though running this type of algorithms can be very interesting in case of triangle {\em counting}, in the case of triangle {\em listing} the actual writing to file of the triangle list creates a bottleneck which hinders the usefulness of a parallel approach compared to a sequential one 
% \fl{Dangereux de parler de ce bottleneck, c'est juste mes tests qui ont montré ça, mais ça ne veut pas dire qu'il existe pas des méthodes plus futées pour lister en parallèle}.
% \cm{Rajout\'e le paragraphe ci-dessus sur la parall\'elisation, \`a relire et \`a modifier}
%
The code is in \texttt{c++} and uses \texttt{gnu make 4} and the compiler \texttt{g++ 8.2}
with optimization flag \texttt{Ofast} and \texttt{openmp} for parallelisation.
%The scripts are automated with \texttt{gnu bash 4.3.42}.
%Figures and tables are obtained with \texttt{python 3.8.10}.
We run all the programs on
%an isolated node of a computing cluster to ensure that time measurements are not affected by other processes. The machine is
a \texttt{sgi ub2000 intel xeon e5-4650L @2.6 GHz, 128Gb ram} running \texttt{linux suse 12.3}.

Regarding the state of the art, 
%
% \lt{there is no implementation available online for \APP{}, while}
%
the most competitive implementation available for triangle listing is \texttt{kClist} in {\tt c}~\cite{danisch2018listing}, which has already been shown to outperform previous programs~\cite{kanade2004cliques,latapy2008main}. 
It lists $k$-cliques using a core ordering and a recursive algorithm that is equivalent to \APM{} for $k=3$. 
We compared our implementation to \texttt{kClist} in various settings and found that ours is 14\% faster on average, presumably because it does not use recursion.
Moreover, the paper that identified core-\APP{} and 
degree-\APP{} as the fastest methods~\cite{ortmann2014} does not provide the corresponding code.
Therefore, we only use our own implementation of \APM{} and \APP{} in the rest of this paper: we exclusively focus on the speedup caused by the vertex ordering, separating it from the speedup originating from the implementation.
%
%Therefore, we only use our own implementation for \APM{} in the rest of this paper, to ensure that the results depend on the combination of ordering and algorithm, but not on programming differences.

% \input{section-assessing-neigh}

\subsection{Cost and running time are linearly correlated.}
In order to show that the cost functions \DPP{} and \DPM{} are good estimates of the running time,
we measure the correlation between the running time of \listonly{}
 and the corresponding cost induced by various orderings (core, degree, our heuristics, but also breadth- and depth-first search, random ordering, etc).
%Because the formulas for these costs represent the number
%of operations required to complete the listing, we expect a linear
%correlation.
%
In~Figure~\ref{img:time-wrt-dpm}, we see that the running time for a given dataset correlates almost linearly to the corresponding cost:
the lines represent linear regressions.
It only presents some of the datasets for readability, but the correlation is above 0.82 on all of them.
%
% \todo[inline]{CM : J'avais pas fait gaffe mais il y a une seule régression pour un dataset pour A++ et A+- ? Ça n'aurait pas du sens de faire deux régressions ?}
% \todo[inline]{FL : c'est là le message : on dit que les coûts décrivent vraiment la durée de l'algo, que ce soit A++ ou A+-. Faut peut-être l'expliquer plus mais où ?}
In other words, the execution time of a listing algorithm is almost a linear function of the cost induced by the ordering, which is why reducing this cost actually improves the running time, as we will see.

\begin{figure}[h]
    % \vspace{-7mm}
    \centering
    \includegraphics[width=\linewidth]{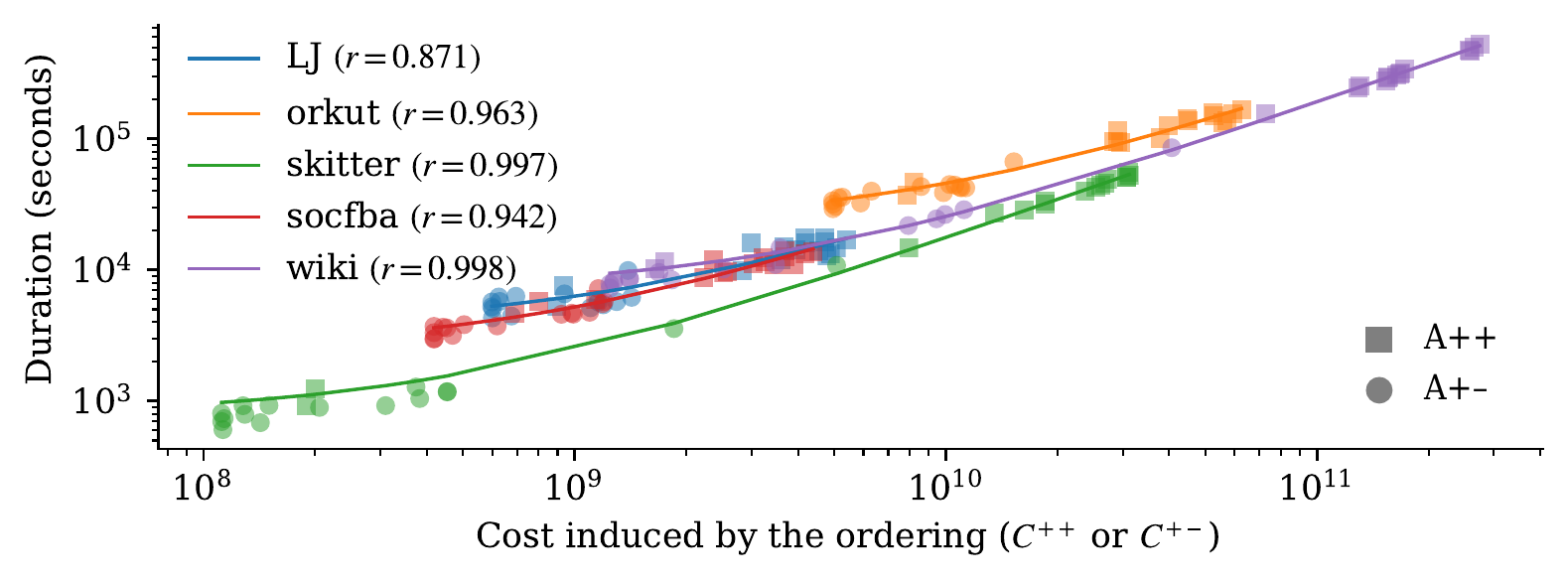}
    \caption{\textbf{Algorithm running time vs the cost induced by the ordering.}
      Each mark represents an ordering: circles are for cost \DPM{} and algorithm \APM{}, squares are for cost \DPP{} and algorithm \APP{}. 
       Each color represents a dataset: the line of linear regressions and associated correlation coefficients $r$ show the proportionality between cost and time. }
    % Each color represents a dataset, given with its linear regression and its correlation coefficient $r$. } % Note that the log-scale explains why the lines are curved.
\label{img:time-wrt-dpm}
\end{figure}

\begin{figure*}[t]
    \centering
    \includegraphics[width=.98\textwidth]{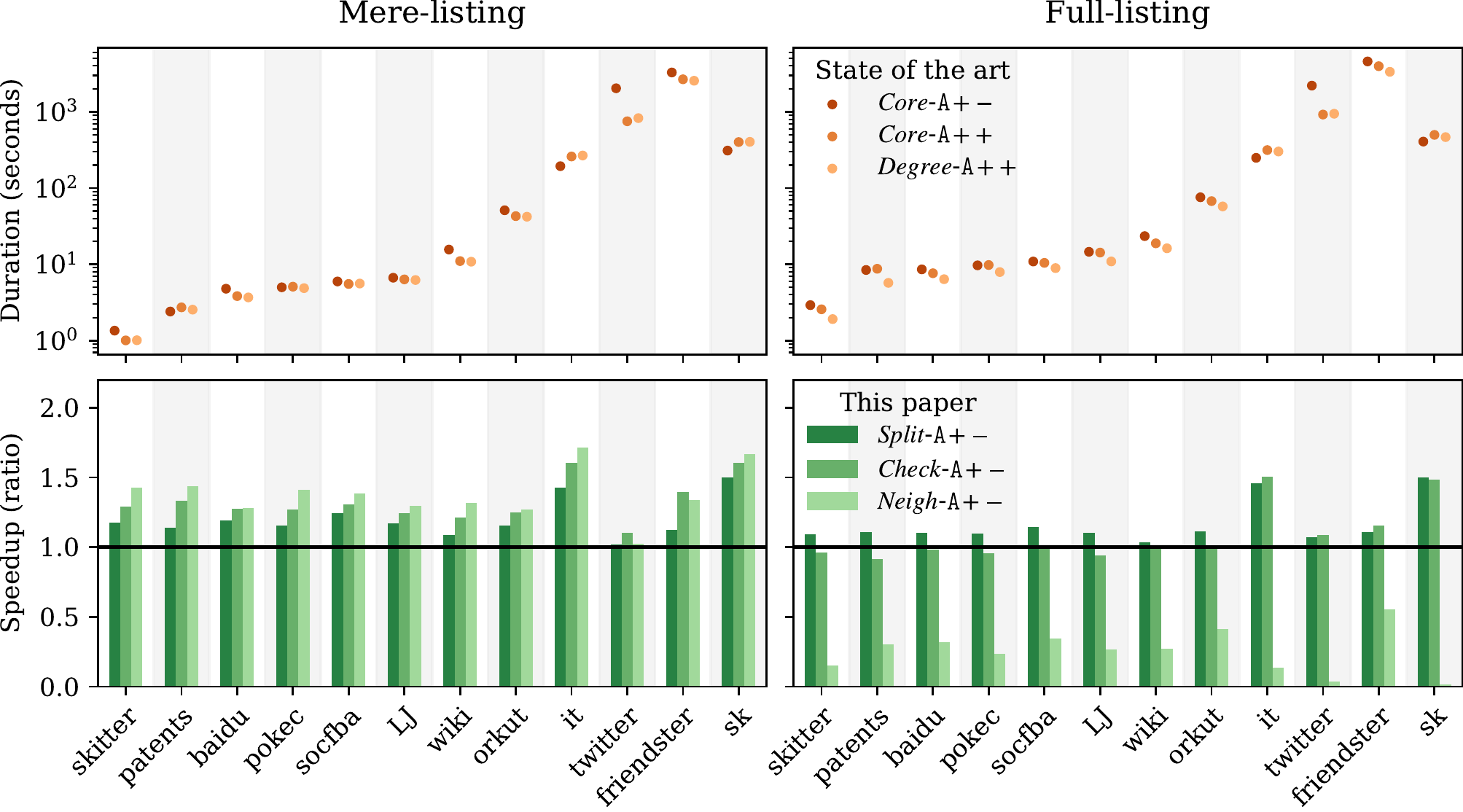}
\caption{\textbf{Comparison of state-of-the-art methods and speedup of our methods.} The top charts show the runtime of the three state-of-the-art methods; depending on the dataset, the fastest method is not always the same. The bottom charts show the speedup of our three methods against the fastest existing method of each dataset. On the left, for \listonly{}, we see that our three heuristics consistently outperform the three state-of-the-art methods, and that \textit{Neigh} or \textit{Check} are the fastest. On the right, for \listfull{}, \textit{Neigh} is not efficient but \textit{Split} is always faster than existing methods and \textit{Check} is faster on bigger datasets.}
\label{img:listing-combined}
\end{figure*}

\subsection{\textit{Neigh} outperforms previous \listonly{} methods.}

We compare our methods to the state of the art for \listonly{} (core-\APM{} in~\cite{danisch2018listing} and core-\APP{} in~\cite{ortmann2014}) and for \listfull{} (degree-\APP{} in~\cite{ortmann2014}) in Figure~\ref{img:listing-combined}.
The top charts present the running time of the three state-of-the-art methods for all datasets, for the \listonly{} task (left)
and the \listfull{} task (right). 
We can see that there is no clear winner for \listonly{}: 
both \APP{} methods have a very similar duration, but core-\APM{} can be between 1.4 times faster and 2.4 times slower depending on the dataset.
This explains why \cite{ortmann2014} and \cite{danisch2018listing} did not agree on the fastest method.  

On the other hand, our heuristics \textit{Neigh}, \textit{Check} and \textit{Split}
manage to produce orderings significantly lower \DPM{} costs. This translates directly into short running times for \listonly{} with \APM{}.
To compare our contributions with the state of the art, we take for each dataset the fastest of the three existing methods.
The bottom left chart of Figure~\ref{img:listing-combined} shows the speedup of our methods compared to the fastest existing one. Exact runtimes of the best existing and of the methods proposed in this work are reported in Table~\ref{tab:runtimes}.

% \todo[inline]{CM : pas intuitif pour moi comment on passe de l'axe des y de la figure 7 qui donne des valeurs plus grandes que 1 à des pourcentages plus petits que 100\% ci-dessous. Donner la formule et ou remplacer les pourcentages par les valeurs de l'axe ?}
% \todo[inline]{FL : si la vitesse est 1.38x celle de l'état de l'art, je dis que c'est 38\% plus rapide. C'est pas bon ?}

The main result is that \textit{Neigh}-\APM{} is always faster than the best previous method. 
The speedup is 1.38 on average and ranges from only 1.02 on \textit{twitter} to 1.71 on the \textit{it} dataset.
\textit{Check}-\APM{} is almost as good, with a 1.32 average speedup ranging from 1.10 to 1.60; it is even faster than \textit{Neigh}-\APM{} on two of the datasets.
\textit{Split}-\APM{} is a little slower, which is expected because this ordering is designed to be obtained quickly and does not reduce \DPM{} as efficiently as our other heuristics.
However it still consistently outperforms all the previous methods, with a 1.20 average speedup.

% \subsection{Benchmarking our methods for full-listing}

%\medskip

%For full-listing, the competing algorithms and orderings are the same as above, but this time \texttt{PP}-degree is taken as a reference, since it was identified as the fastest method in~\cite{ortmann2014,li2020cliques}.
%

\begin{table}[!h]
    \centering
\begin{footnotesize}
    \begin{tabular}{r||c|c||c|c}
    & \multicolumn{2}{c||}{\textbf{\listonly{}}}& \multicolumn{2}{c}{\textbf{\listfull{}}} \\
    dataset & existing & this paper & existing & this paper \\ 
    \hline
skitter & 1.00s & 0.71s & 1.91s & 1.75s \\
patents & 2.40s & 1.67s & 5.71s & 5.15s \\
baidu & 3.68s & 2.87s & 6.38s & 5.77s \\
pokec & 4.87s & 3.44s & 7.91s & 7.21s \\
socfba & 5.52s & 3.98s & 8.92s & 7.79s \\
LJ & 6.23s & 4.79s & 10.91s & 9.88s \\
wiki & 10.82s & 8.22s & 16.23s & 15.65s \\
orkut & 42.11s & 33.09s & 57.47s & 51.60s \\
it & 3m13 & 1m53 & 4m09 & 2m45 \\
twitter & 12m31 & 11m20 & 15m21 & 14m08 \\
friendster & 42m36 & 30m31 & 55m47 & 48m13 \\
sk & 5m10 & 3m06 & 6m47 & 4m31 \\
    \end{tabular}
\end{footnotesize}
    \caption{\textbf{Duration of triangle listing of existing methods against methods of this paper.} For each dataset, we compare the fastest state-of-the-art method against the fastest of our methods. Recall that \listonly{} only takes into account the runtime of the listing algorithm (\APP{} or \APM{}) while \listfull{} also counts the graph loading time and the ordering time.}
    \label{tab:runtimes}
\end{table}

\subsection{\textit{Split} outperforms previous \listfull{} methods.}

% As previously explained, there is a trade-off between ordering time and spent for computing an ordering and the reduction of execution time of the triangle-listing phase.
For \listfull{}, the top right chart of Figure~\ref{img:listing-combined} compares the three state-of-the-art methods and shows that degree-\APP{} is the fastest for almost all datasets. 
This result is consistent with the result reported in~\cite{ortmann2014}, that specifically addresses \listfull{}.
The bottom right chart shows the speedup of our three methods compared to the fastest state-of-the-art method. 
Note that the \textit{Neigh} heuristic is not competitive here (speedup under one) since its ordering time is long compared to other methods. 

%\todo[inline]{CM : ci-dessous la footnote mélange des facteurs qui s'appliquent pas à la même paire de méthodes ?}
The main result is that \textit{Split}-\APM{} is always faster than previous methods.
The speedup compared to existing methods is 1.16 on average, and it ranges from 1.04 on \textit{wiki} to 1.50 on \textit{it} dataset.
%\footnote{Note that this 1.16 average factor  multiplies with the 1.14 speedup of the implementation reported in Section~\ref{exp:setup}}.
% 
\textit{Check} also gives very good results: on medium datasets, it is a bit slower than degree-\APP{}, but it outperforms all state-of-the-art methods on large datasets (\textit{it, twitter, friendster, sk}), and it even beats \textit{Split} on three of them. 
This hints at a transition effect: 
the \textit{Check} ordering has a lower \DPM{} value but it takes $\mathcal{O}(m)$ steps to compute, 
while \textit{Split} only needs $\mathcal{O}(n)$;
for larger datasets, the listing step prevails, so the extra time spent to compute \textit{Check} becomes profitable.

%As expected, the results of \textit{Neigh} are very poor in this category: its ordering time can be several times higher than the only-listing, which makes its median full-listing 3.6 times slower than the reference.

%Among previous methods, we confirm that \texttt{PP}-degree is the best for medium datasets. But for bigger datasets, which were not studied in previous works, the result is not as clear: \texttt{PP}-core yields close running times, and \texttt{PM}-core clearly outperforms it on some instances.

%%%%%%%%%%%%%%%%%%%%%%%%%%%%
%%%% Perspectives, ccl %%%%%
%%%%%%%%%%%%%%%%%%%%%%%%%%%%

%\section{Perspectives}
%\subsection{Extension to other types of graphs}
%Directed-, temporal-, multi- graphs.
%\subsection{Listing cliques}
%\subsection{Listing motifs}

\section*{Conclusion}

In this work, we address the issue of in-memory triangle listing in large graphs.
We formulate explicitly the computational costs of the most efficient existing algorithms, and investigate how to order vertices to minimize these costs.
After proving that the optimization problems are NP-hard, we propose scalable heuristics that are specifically tailored to reduce the costs induced by the orderings.
We show experimentally that these methods outperform
the current state of the art for both the \listonly{} and the \listfull{} tasks.

Our results also emphasize a limitation in the possible acceleration: while it is certainly possible to keep improving the \listonly{} step, a significant part of \listfull{} is spent on other steps: computing the ordering, but also loading the graph or writing the output. 
It seems, however, that the \listonly{} step takes more importance as graphs grow larger, which makes our listing methods all the more relevant for future, larger datasets.
A natural extension of this work is to use similar vertex ordering heuristics in the more general case of %in-memory 
clique listing.
Formulating appropriate cost functions for clique listing algorithms is not straightforward and requires studying precisely the different possibilities to detect all the vertices of a clique.

\section*{Acknowledgements}
% Anonymized for the purpose of double-blind review.
We express our heartfelt thanks to Maximilien Danisch who initiated this project. We also thank Alexis Baudin, Esteban Bautista, Katherine Byrne and Matthieu Latapy for their valuable comments.
This work is funded by the ANR (French National Agency of Research) partly by Limass project (under grant ANR-19-CE23-0010) and partly by ANR FiT LabCom.

\bibliographystyle{abbrv}
\bibliography{bibliography}

\appendix
\section{NP-hardness of the \DPM{} problem}\label{appendix:dpm}

Given a graph G and an order $\prec$ on the vertices of
G, we define $succ_{\prec}(u)$ (respectively $pred_{\prec}(u)$) as the set of neighbors $v$ of $u$ such that $u\prec v$ (resp. $v\prec u$). For any subset of vertices $W$, we note $\DPM_\prec(W)=\sum_{u\in W} |succ_{\prec}(u)|\cdot |pred_{\prec}(u)| $.
 Using this definition we formalize the following problem:

\begin{problem}[\DPM{}]
Given an undirected graph $G=(V,E)$ and an integer $K$, is there an order $\prec$ on the vertices such that 
    $\DPM_\prec(V) \leq K$?
\end{problem}

%\louis{Est-ce que l'appelation "not-all-equal positive three-satisfiability" existe ? Sur wikipedia ils appellent ça "monotone not-all-equal 3-satisfiability". }
% \fab{
% L'appellation a pas l'air très courante mais elle est là :
% - wikipedia, Boolean satisfiability problem pour 1-in-3-SAT : When all literals of a one-in-three 3-SAT formula are positive, the satisfiability problem is called one-in-three positive 3-SAT. (https://en.wikipedia.org/wiki/Boolean_satisfiability_problem#Exactly-1_3-satisfiability)
% - papier "k-Bounded Positive Not All Equal LE3-SAT"  (https://is.gd/zVaDTk)

% Monotone a l'air d'avoir une définition un peu différente (chaque clause a soit que des variables négatives, soit que des positives) :
% - blog de problèmes NP-compelts (https://is.gd/fq5E88)
% - papier "On simplified NP-complete variants of Not-All-Equal 3-Sat and 3-Sat" (https://arxiv.org/abs/1908.04198)
% }
% Louis: OK (après si c'est monotone la SAT de positive = la SAT de négative pour NAE3SAT)
% Fab: ah oui effectivement :) mais ça m'arrange pour la démo d'avoir que des literaux positifs

\begin{problem}[\SAT{}]
Not-All-Equal Positive Three-Satisfiability.
Given a formula $\phi=c_1\land\dots\land c_m$ in  conjunctive normal form where each clause consists in three positive literals, is there an assignment to the variables satisfying $\phi$
% such that each clause contains at least one true and one false literals?
such that in no clause all three literals have the same truth value?
\end{problem}
% \louis{La formulation du problème doit être améliorée, par exemple: ...}
%Given a conjunctive normal form +formula+ where each clause consists in three positive literals, is there an assignment to the variables +satisfying the formula+ such that in no clause all three literals have the same truth value?}
% \fab{est-ce que ça va comme ça, ou il faut carrément mettre $\phi(x_1,\dots,x_n)$ et définir les $l_j^a$ ?}
% \louis{En fait ce qui me gênait c'était que l'énoncé disait "given a normal form" là
% où on voulait quelque chose comme "given a formula in normal form". J'ai édité en ce sens.}

The \SAT{} problem is known to be NP-complete by Schaefer's dichotomy theorem \cite{schaefer1978complexity}.
% \louis{Schaefer's +dichotomoy+ theorem?}
We will show that this problem can be reduced to the \DPM{} problem, thus proving that \DPM{} is NP-hard.
Note that a proof was given in-hard~\cite{stack2017dpm} but, as far as we know, it has never been published. We give a new simpler proof of the following theorem:

\begin{theorem}
\DPM{} is NP-hard.
% \fab{c'est pas un peu tristoune comme théorème ?}
% \louis{On peut appeler ça proposition.}
\end{theorem}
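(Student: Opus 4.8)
## Proof Plan

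The plan is to reduce \SAT{} to the \DPM{} decision problem. Given a positive NAE-3SAT formula $\phi = c_1 \wedge \dots \wedge c_m$ over variables $x_1, \dots, x_n$, I would build a graph $G_\phi$ together with a threshold $K$ such that $\phi$ has a NAE-satisfying assignment if and only if $G_\phi$ admits an order $\prec$ with $\DPM_\prec(V) \leq K$. The core idea is that a linear order on the vertices, restricted to a suitable gadget, should be forced to encode a Boolean assignment: each variable $x_i$ gets a gadget whose only ``cheap'' configurations are two mirror-image orderings, interpreted as $x_i = \text{true}$ or $x_i = \text{false}$. The product $|succ_\prec(u)| \cdot |pred_\prec(u)|$ is minimized (for fixed degree $d_u$) when $u$ is pushed to one extreme of the order and maximized when $u$ sits in the middle, so high-degree hub vertices act as ``polarity anchors'': placing a hub early forces its neighbors late, and vice versa.

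The key steps, in order: (1) For each variable $x_i$, introduce a vertex (or a small cluster of vertices acting as one unit) $v_i$, and attach enough private pendant/padding structure so that the \DPM{} contribution of $v_i$'s gadget has exactly two minimizing states, corresponding to $v_i$ being placed near the front versus near the back of $\prec$; call these the \emph{true} and \emph{false} states. (2) For each clause $c_j = (x_a \vee x_b \vee x_c)$, introduce a clause gadget — e.g. a small set of vertices connected to $v_a, v_b, v_c$ — engineered so that its minimum \DPM{} contribution is achieved precisely when the three literal-vertices are \emph{not} all on the same side of the order (i.e. not all-true and not all-false), and incurs a strictly larger penalty otherwise. (3) Set $K$ to be the sum of all these per-gadget minima. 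Then show: if $\phi$ has a NAE assignment, placing each $v_i$ in its true/false state according to the assignment and completing the order greedily within gadgets achieves $\DPM_\prec(V) = K$; conversely, any order with $\DPM_\prec(V) \leq K$ must achieve each gadget's minimum simultaneously, which forces every $v_i$ into a clean true/false state and every clause to be NAE-satisfied, yielding a valid assignment. (4) Observe the construction is polynomial in $n + m$, and that \DPM{} is clearly in NP (the order is a polynomial certificate), completing the NP-hardness (indeed NP-completeness) argument. Since the excerpt only asks for NP-hardness, membership in NP is a bonus remark.

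The main obstacle will be engineering the gadgets so that the three kinds of ``pressure'' interact cleanly: I need (a) the variable gadgets to be \emph{bistable} — exactly two cheap states, no cheap intermediate configuration — which requires carefully balancing pendant degrees so that any order placing $v_i$ in the interior pays a strictly larger $|succ| \cdot |pred|$ cost; (b) the clause gadgets to penalize monochromatic configurations without accidentally penalizing legitimate ones, which is delicate because \DPM{} is a global sum and moving a clause-gadget vertex can shift costs on the $v_a, v_b, v_c$ it touches; and (c) the ``crosstalk'' between gadgets to be negligible or exactly accountable, so that the global minimum $K$ genuinely decomposes as a sum of local minima. A standard way to neutralize crosstalk is to make the variable-anchor vertices have degree dominated by private padding so that the clause-gadget edges contribute a bounded, predictable amount; getting these quantitative inequalities to line up — in particular proving the converse direction, that $\DPM_\prec(V) \leq K$ forces \emph{every} gadget to its minimum — is where the real work lies. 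I expect the cleanest route is to prove a per-gadget lower bound $\DPM_\prec(\text{gadget}_i) \geq \mu_i$ that holds for \emph{any} order, with equality characterizing exactly the intended states, so that summing gives $\DPM_\prec(V) \geq \sum_i \mu_i = K$ with equality iff a NAE assignment is encoded.
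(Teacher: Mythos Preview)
Your high-level strategy matches the paper's: reduce from positive NAE-3SAT, let a vertex's position in the order (front vs.\ back) encode a truth value, and design a clause gadget that is cheap exactly when its three literals are not all on the same side. However, the paper's construction is far simpler than what you are planning and sidesteps the padding, bistability, and crosstalk issues you worry about. Each variable $x_i$ is a \emph{single} vertex $X_i$ with no private padding; each clause $c_j$ is just a \emph{triangle} on literal vertices $L_j^1,L_j^2,L_j^3$, with one edge from each $L_j^a$ to the corresponding variable vertex. The threshold is $K=2m$. The lower bound is immediate and unconditional: in any order, the middle literal of each triangle has one predecessor and one successor inside the triangle plus one more neighbour (its $X_i$), so its $|succ|\cdot|pred|$ term is at least $2$, and the $m$ triangles are vertex-disjoint, giving $\DPM_\prec(V_\phi)\ge 2m$. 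Equality then forces \emph{every other vertex} to contribute $0$; in particular each $X_i$ must have only successors or only predecessors (this is your assignment), and the first and last literals of each triangle, having two neighbours already on one side, force their attached $X_i$ to that same side, yielding one true and one false literal per clause. So the per-gadget lower bound you aim for in step (c) comes for free from the triangle, and the ``equality forces every gadget to its minimum'' argument is just ``everything else must be zero'' --- no quantitative balancing of pendant degrees is needed.
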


\begin{definition}
Let $\phi$ be an instance of \SAT{} with variables
%\louis{variables plutôt que literals non ?} 
$x_1,\dots,x_n$ and clauses $c_1,\dots,c_m$, where clause $c_j$ is of the form $l_j^1\lor l_j^2\lor l_j^3$.
We define a graph $G_\phi$ by creating three connected vertices $L_j^1,L_j^2,L_j^3$ representing the literals of each clause $c_j$; additionally, a vertex $X_i$ is created for each variable $x_i$ and connected to all the $L_j^a$ such that $l_j^a=x_i$. 
% \louis{"A clause is a triangle" should be reformulated}
%
More formally, $G_\phi=(V_\phi, E_\phi)$ with:
% \louis{Je pense que c'est mieux quand on introduit un graph
% de commencer par dire de façon informelle quelque chose comme
% "the vertices of the graphs will be vertices $X_i$ correspondings to
% variables and vertices $L^1_j, L^2_j, L^3_j$ corresponding to
% the 3 literals appearing in the clause $j$".
% }
\begin{itemize}
    \item $V_\phi = 
        \{ X_i\ |\ i\in \llbracket 1, n\rrbracket \} \cup 
        \{ L_j^1, L_j^2, L_j^3\ |\ j\in \llbracket 1, m\rrbracket \}$
    \item $E_\phi = 
        \left\{ \{L_j^1, L_j^2\}, \{L_j^1, L_j^3\}, \{L_j^2, L_j^3\}\ |\ j\in \llbracket 1, m\rrbracket \right\} \cup
        \left\{ \{ X_i, L_j^a \}\ |\ 
        %i\in \llbracket 1, n\rrbracket, j\in \llbracket 1, m\rrbracket, a\in \llbracket 1, 3\rrbracket \}, 
        x_i = l_j^a \right\}$
\end{itemize}

\vspace{5mm}
\begin{center}
\begin{tikzpicture}[->,>=stealth,shorten >=1pt,auto,node distance=2.8cm,semithick]
  \node (l1) at (2,3.7) {$L_j^1$} ;
  \node (l2) at (2,2.3) {$L_j^2$} ;
  \node (l3) at (4,3) {$L_j^3$} ;
  \node (x1) at (.5,3.7) {$\dots\ X_{i_1}$} ;
  \node (x2) at (.5,2.3) {$\dots\ X_{i_2}$} ;
  \node (x3) at (5.5,3) {$X_{i_3}\ \dots$} ;
  \draw[-] (l1) -- (l2) ;
  \draw[-] (l3) -- (l2) ;
  \draw[-] (l1) -- (l3) ;
  \draw[-] (l1) -- (x1) ;
  \draw[-] (l2) -- (x2) ;
  \draw[-] (l3) -- (x3) ;
\end{tikzpicture}
\end{center}
\end{definition}
\vspace{5mm}

% \louis{Theorem c'est un peu fort, peut-être faut-il réserver théorème à la NP-hardness du problème et mettre ça en proposition (voire en 2 proposition, une pour chaque sens) ?}
\begin{proposition}[$\Longrightarrow$]
Given an instance $\phi$ of \SAT{} with $m$ clauses and the associated graph $G_\phi$, if $\phi$ is satisfiable then there exists an order $\prec$ on $V_\phi$ such that $\DPM_\prec(V_\phi)\leq 2m$.
\end{proposition}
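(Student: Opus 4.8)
The plan is to construct an explicit ordering $\prec$ from a satisfying NAE-assignment and bound the contribution of each vertex to $\DPM_\prec(V_\phi)$. Given a NAE-assignment, each clause $c_j$ has its three literals split as either two true / one false, or one true / two false. The key idea is to place vertices in three blocks: first a block $B^+$ of literal-vertices $L_j^a$ whose literal is true, then the block $X$ of all variable-vertices $X_i$, then a block $B^-$ of literal-vertices whose literal is false. Inside each block the internal order can be arbitrary (say, group the three $L_j^a$ of a common clause consecutively, with all true ones of that clause adjacent in $B^+$ and all false ones adjacent in $B^-$). I would then check that with this layout every vertex contributes at most a small constant, summing to at most $2m$.

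The main step is the per-vertex bookkeeping. Consider a variable-vertex $X_i$: its neighbours are exactly the literal-vertices $L_j^a$ with $x_i = l_j^a$; those set to true lie in $B^+$ (hence are predecessors of $X_i$) and those set to false lie in $B^-$ (hence successors). So $|pred(X_i)|\cdot|succ(X_i)| = (\#\text{true occurrences of }x_i)\cdot(\#\text{false occurrences})$, which is $0$ unless $x_i$ occurs with both truth values — but under a fixed assignment every occurrence of $x_i$ has the same truth value, so this product is always $0$. Hence variable-vertices contribute nothing. Now consider a literal-vertex $L_j^a$. Its neighbours are the other two literal-vertices $L_j^b, L_j^c$ of the same clause, plus the single variable-vertex $X_{i}$ with $x_i = l_j^a$. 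If $L_j^a$ is true it sits in $B^+$, so $X_i$ (in the middle block) is a successor; and among $L_j^b,L_j^c$, by the NAE condition at least one is false (a successor, in $B^-$), and internal ordering of $B^+$ can be chosen so that, if the other clause-mate is also true, it is a predecessor. A short case analysis on whether the clause is $(2\text{ true},1\text{ false})$ or $(1\text{ true},2\text{ false})$ gives: in the $2/1$ case each of the two true literal-vertices has exactly one predecessor among $\{L_j^b,L_j^c\}$ (the other true one, ordered first) — wait, only one of them; I would order the two true clause-mates so that one has the other as predecessor, giving products $1\cdot(1+1)=2$ (two successors: the false clause-mate and $X_i$... ) — the precise arithmetic must be done carefully, but the point is each clause $c_j$ contributes a total of at most $2$ across its three literal-vertices when the internal orders are chosen well, or at worst the sum over all clauses is $\le 2m$.

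The delicate part — and the main obstacle — is pinning down the internal ordering within $B^+$ and $B^-$ so that the constant works out to exactly $2m$ rather than something larger. One clean way to force it: keep the three literal-vertices of each clause $c_j$ together as a consecutive block, but place the true ones to the left (inside $B^+$) and false ones to the right (inside $B^-$), so that across the clause the predecessor/successor relations among $\{L_j^1,L_j^2,L_j^3\}$ form a transitive tournament on $3$ vertices contributing $0+1+? $ — again needing care — and then verify that adding the middle-block neighbour $X_{i}$ (predecessor for true literals, successor for false literals) keeps the clause total at $2$. I would organize the final bound as $\DPM_\prec(V_\phi) = \sum_j (\text{contribution of clause } c_j\text{'s three literal-vertices}) + \sum_i (\text{contribution of }X_i) \le \sum_j 2 + 0 = 2m$, with the clause-level inequality established by the finite case check over the two NAE-types, choosing the intra-block order to realize the minimum. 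If a naive intra-block order gives $3$ per clause instead of $2$, the fix is to interleave clauses so that true literal-vertices sharing no variable are grouped — but I expect the straightforward "clause-consecutive, true-left/false-right" layout already achieves $2m$, so I would try that first.
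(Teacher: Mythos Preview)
Your approach is correct and genuinely different from the paper's. The paper uses a \emph{five}-block order: true variables $\prec$ one designated false literal per clause $\prec$ one ``other'' literal per clause $\prec$ one designated true literal per clause $\prec$ false variables. With that layout every $X_i$ has only successors or only predecessors, and in each clause the two extreme literal-vertices contribute $0$ while the middle one contributes exactly $2$, giving $\DPM=2m$ with no case split. Your three-block order (true literals $\prec$ all variables $\prec$ false literals) also works, and your argument for $X_i$ is exactly right since all literals are positive. The difference is that the paper isolates the ``cost-$2$'' vertex of each clause by construction, whereas you locate it via a case analysis on the NAE-type of the clause.

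Your hesitation about the intra-block order is unnecessary: \emph{any} ordering inside $B^+$ and $B^-$ gives exactly $2$ per clause. In a $2$-true/$1$-false clause, the false literal has three predecessors and contributes $0$; of the two true literals, whichever comes first has no predecessor (contributes $0$) and the other has one predecessor and two successors (contributes $2$). The $1$-true/$2$-false case is symmetric. So there is no risk of getting $3$ per clause, and no need to interleave or otherwise tune the internal order. Once you carry out this short check, your proof is complete and in fact slightly simpler in its block structure than the paper's, at the price of the small case distinction.
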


\begin{proof}
Let $\phi$ be a satisfiable instance of \SAT{} with the above notations. Take a valid assignment and let us note $k$ the 
number of variables set to true. There exist indices $i_1,\dots,i_n$ such that $x_{i_1}, \dots, x_{i_k}=true$ and $x_{i_{k+1}},\dots,x_{i_n}=false$, and for each clause $c_j$, there are indices $t_j,a_j,f_j\in\{1,2,3\}$ such that $l_j^{t_j}=true$, $l_j^{f_j}=false$ and $l_j^{a_j}$ has any value.
%
% \louis{Je suis d'accord que c'est wlog mais j'ai peur
% qu'un lecteur pense qu'on soit en train de le berner.
% Mettre $x_{i_1}, \dots, x_{i_k}$ n'est pas beaucoup moins clair non ?}
% Consider without loss of generality that
%  $x_1,\dots,x_k$ are true and $x_{k+1},\dots,x_n$ are false, and that in each clause $c_j$, $l_j^1$ is false, $l_j^3$ is true and $l_j^2$ is either true or false. Now construct the following order on $V_\phi$: 
% \louis{On peut donner l'intuition de l'ordre: d'abord les variables vraies apparraissent avant tous les littéraux et les variables fausses apparraissent après. Pour chaque clause, on met en premier le littéral vrai ensuite un littéral 
% }
Now construct the following order on $V_\phi$, so that true variables come first, then in each clause the false literal comes before the true one, and the false variables are at the end: 
%
% {\fontsize{8pt}{10pt}\selectfont 
\begin{align*}
    & X_1\prec\dots\prec X_k\qquad&
    & \text{True variables}\\
\prec\ 
    & L_1^{f_1}\prec \dots\prec L_m^{f_m}&
    & \text{False literals} \\
\prec\
    & L_1^{a_1}\prec \dots\prec L_m^{a_m}&
    & \text{Other literals} \\
\prec\
    & L_1^{t_1}\prec \dots\prec L_m^{t_m}&
    & \text{True literals} \\
\prec\
    & X_{k+1}\prec\dots\prec X_n&
    & \text{False variables}
\end{align*}
% }

If a given variable $x_i$ is true, the associated vertex $X_i$ has only successors, if it is false it has only predecessors, so in both cases $\DPM_\prec(\{X_i\})=0$.
For a given clause $c_j$, the variable $l_j^{f_j}$ is false so the corresponding $X_i$ is a successor of $L_j^{f_j}$, which also has successors $L_j^{a_j}$ and $L_j^{t_j}$, but no predecessor. Similarly, $L_j^{t_j}$ has no successor; thus $\DPM_\prec(\{L_j^{f_j}, L_j^{t_j}\})=0$.
Now $L_j^{a_j}$ has one predecessor $L_j^{f_j}$, one successor $L_j^{t_j}$, and one neighbor $X_i$ that is a predecessor if $x_i$ is true, otherwise a successor; in both cases, $\DPM_\prec(\{L_j^{a_j}\})=2$. 
The only vertices with a non-negative cost are the $L_j^{a_j}$, so the sum over all $m$ clauses gives $\DPM_\prec(V_\phi)=2m$.
\end{proof}

\begin{proposition}[$\Longleftarrow$]
Given an instance $\phi$ of \SAT{} with $m$ clauses and the associated graph $G_\phi$, if there exists an order $\prec$ on $V_\phi$ such that $\DPM_\prec(V_\phi)\leq 2m$ then $\phi$ is satisfiable.
\end{proposition}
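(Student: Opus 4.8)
The plan is to prove the contrapositive: assuming a valid NAE assignment does not exist, show that every order $\prec$ on $V_\phi$ forces $\DPM_\prec(V_\phi) > 2m$. Equivalently, and more usefully, I would prove directly that from any order $\prec$ with $\DPM_\prec(V_\phi) \leq 2m$ one can read off a satisfying NAE assignment. The natural candidate assignment is: set $x_i = \text{true}$ if $X_i$ comes before ``most'' of its literal-neighbours in $\prec$, and $\text{false}$ otherwise — but the cleaner choice, mirroring the forward construction, is to look clause by clause. For each clause $c_j$, the three literal vertices $L_j^1, L_j^2, L_j^3$ form a triangle, so in $G_\pi$ this triangle is oriented acyclically: exactly one of them, call it the \emph{middle} vertex, has one predecessor and one successor among the other two. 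That middle vertex therefore already contributes at least $1 \cdot 1 = 1$ to $\DPM_\prec$ from its two triangle-neighbours alone, and contributes at least $2$ as soon as it has \emph{any} incident $X_i$ edge that does not ``cancel'' — which is exactly what we want to exploit.

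The key steps, in order: (1) First establish a per-clause lower bound: $\DPM_\prec(\{L_j^1, L_j^2, L_j^3\}) \geq 2$ for every clause $j$, using only the triangle edges together with the fact that each $L_j^a$ has an additional neighbour $X_{i}$ — the middle vertex of the triangle has a predecessor and a successor from the triangle, so its cost is at least $1 + (\text{contribution of its } X\text{-edge})$; a short case analysis shows this is at least $2$ \emph{unless} a very specific local configuration occurs. Summing over all $m$ clauses gives $\DPM_\prec(V_\phi) \geq 2m$, with equality only if every clause is in that tight configuration. (2) Next, characterize the tight configuration: $\DPM_\prec(\{L_j^1,L_j^2,L_j^3\}) = 2$ and no $X_i$ contributes extra cost forces, for each clause, that the top literal of the triangle (the one with two successors inside the triangle) has its $X$-neighbour as a successor too, the bottom literal has its $X$-neighbour as a predecessor, and every $X_i$ vertex is ``monotone'' (all-successors or all-predecessors). (3) Finally, define the assignment $x_i := \text{true}$ iff $X_i$ has only successors; check that in each clause the top literal is then true and the bottom literal is false, so not all three literals of $c_j$ share a truth value — giving a valid NAE assignment.

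The main obstacle I expect is Step (2): tracking exactly when the $X_i$ edges add cost and proving that equality $\DPM_\prec(V_\phi) = 2m$ pins down a globally consistent picture. The subtlety is that a single $X_i$ may touch literal vertices in several different clauses, so requiring $X_i$ to be monotone is a genuine constraint coupling the clauses, and one must verify that the per-clause tightness conditions are simultaneously satisfiable only in a way that respects a single truth value per variable. I would handle this by arguing that if some $X_i$ has both a predecessor and a successor then it contributes at least $1$ to $\DPM_\prec$ \emph{on top of} the $\sum_j 2$ already accounted for by the triangle middles (being careful about double-counting: the $X_i$ vertices are disjoint from the $L_j^a$ vertices, so their costs add cleanly), pushing the total strictly above $2m$; hence in the equality case every $X_i$ is monotone, every clause triangle contributes exactly $2$ through its middle vertex, and the remaining bookkeeping is the routine check that the induced assignment is NAE-satisfying.
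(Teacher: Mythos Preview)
Your approach is essentially the paper's: the middle literal of each clause-triangle contributes at least $2$, summing to $2m$, and equality forces every other vertex---in particular every $X_i$ and every non-middle $L_j^a$---to have cost $0$, from which the assignment is read off. Two small points: first, there is no ``unless'' case in Step~(1)---each $L_j^a$ has degree exactly $3$, so the middle vertex always has $d^+d^-\in\{1\cdot 2,\,2\cdot 1\}=\{2\}$, and the paper uses this directly without case analysis; second, your labels in Step~(3) are swapped (the first literal in the order has its $X$-neighbour as a successor, so that $X_i$ acquires a predecessor and the variable is \emph{false}, not true), though this does not affect the NAE conclusion.
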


\begin{proof}
Conversely, consider an order $\prec$ on $V_\phi$ such that $\DPM_\prec(V_\phi)\leq 2m$.
%
% \louis{Encore une fois j'aime pas trop le wlog, peut-être qu'on
% peut dire "up to reordering the litterals we can
% assume" sinon juste dire qu'un des trois est au milieu.}
For all $j$, define $f_j,a_j,t_j\in\{1,2,3\}$ such that $L_j^{f_j}\prec L_j^{a_j}\prec L_j^{t_j}$; then $L_j^{a_j}$ has one successor, one predecessor, and one other neighbor $X_i$, so its cost is 2. As $G_\phi$ contains $m$ such independent triangles, $\DPM_\prec(\{L_1^{a_1},\dots,L_m^{a_m}\})=2m$.
To ensure $\DPM_\prec(V_\phi)\leq 2m$, all the other vertices must have either only predecessors or only successors. 
If vertex $X_i$ has successors only, assign $x_i$ to true; if $X_i$ has predecessors only, assign $x_i$ to false.
% Note $X_{i_1},\dots,X_{i_k}$ the variable vertices with successors only and $X_{i_{k+1}},\dots,X_{i_n}$ those with predecessors only. Assign $x_1,\dots,x_k$ to true and $x_{k+1},\dots,x_n$ to false.
For all $j$, $L_j^{f_j}$ has at least 2 successors ($L_j^{a_j}$ and $L_j^{t_j}$) so its corresponding $X_i$ has to be a successor, which means $x_i=l_j^{f_j}$ is false; similarly, $l_j^{t_j}$ is true. Each clause thus has one true and one false literal, so $\phi$ is satisfied.
\end{proof}

\section{NP-hardness of the \DPP{} problem}\label{appendix:dpp}

\newcommand{\curProb}{\texorpdfstring{$C^{++}$}{C++}}
\newcommand{\costProb}{weighted-\curProb}
\newcommand{\G}{G}
\newcommand{\V}{V}
\newcommand{\E}{E}

\paragraph{Order of elimination.}

In the main text of the paper, we search for a permutation 
$\pi$ but the only important aspect of the permutation is that
it defines  an order on the vertices. For this NP-hardness proof, it will help
think of the following equivalent but more ``intuitive" formulation 
of the problem:  we are looking for an order $\prec$ minimizing the
cost function of interest. 
We can think of the order as an order in which we eliminate vertices 
and each time we eliminate a vertex with an outdegree $d$ we pay a cost of
$d^2$ and the cost of an order is the cost of  eliminating all vertices.

For the formulation using orders it will help to look at the set of neighbors of $u$
appearing after $u$ in the order $\prec$, which we denote $succ_\prec(u)$ for an order~$\prec$.
Therefore $|succ_\prec(u)|^2$ is
the cost that we pay when we remove $u$, which allows us to reformulate
the \curProb{} problem as:

\begin{problem}[\curProb{}]
For a given undirected graph
    $\G{}=(\V,\E)$ and an integer $K$, does
    there exist an order $\prec$ of the vertices such that 
    $\sum_{u\in \V} |succ_{\prec}(u)|^2 \leq K ?$
\end{problem}

%Clearly this elimination problem is equivalent to the \curProb{} or
%\costProb{} if we have weights on vertices so we can think of the order
%$\prec$ as an order in which we eliminate vertices. \fl{je comprends pas 
%trop ce paragraphe, est-ce qu'il est nécessaire ?}

\paragraph{The \costProb{} problem.}

For the sake of simplicity our proof of completeness will rely on a second novel
problem, the \costProb{} problem, and we will show that \curProb{} is NP-complete
by exhibiting first a reduction between \curProb{} and \costProb{} and then a second
reduction between \costProb{} and the Set Cover problem (a well-known
NP-complete problem). We now present the \costProb{} problem:

\begin{problem}[\costProb{}]
Given an undirected graph
    $\G{}=(\V,\E)$, a vertex-weighting function
    $w:\V\rightarrow \mathbb{N}$ and an integer $K$, does
    there exist an order $\prec$ of the vertices such that {$\sum_{u\in
      \V} (|succ_{\prec}(u)|+w(u))^2 \leq K$?}
\end{problem}

\paragraph{Terminology.}
Given a graph $G$ with the vertex weighting function $w$ and an order $\prec$, the \emph{cost} is the
function {$\sum_{u\in
      \V} (|succ_{\prec}(u)|+w(u))^2$} applied to
the graph with that order. The {\emph{optimal cost}} of a graph is
the minimal {cost} achievable by any order.
Notice that an instance of the weightless problem can be viewed as an
instance of the weighted problem where all weights are 0.
%therefore we can talk about the \emph{cost} of an order on a graph or the \emph{best cost} of a graph.

\subsection{Optimality criteria for orders.}

One difficulty of the reduction proofs is to show
that an order necessarily behaves in a controlled way. We see in this 
section several criteria that ensures that some order has an optimal cost.

We define the notion of multiset of costs that will help expressing optimality criteria for orders.
Given a graph $\G{}$ and an order $\prec$, the multiset
of costs $MC(\G{},\prec)$ is the multiset composed of the
\ltok{$\left( |succ_\prec(u)| + w(u) \right) $} for each vertex $u$ in $\G{}$.
%$ |succ_\prec(u)|$ for each vertex $u$ in $\G{}$. 
The \emph{linear cost} of a multiset $M$ is the sum of elements in the multiset,
i.e., $\sum_{c\in M} c$.
The \emph{cost} (or \emph{squared cost}) of a multiset $M$ is $\sum_{c\in M} c^2$.

\begin{property}
  For a graph $\G{}$ (weighted or not) the size and the linear
  sum of  the multiset $MC(\G{},\prec)$ does not depend on the
  order $\prec$.
\end{property}

\begin{proof}
By definition, the size of $MC(\G{},\prec)$ is $|V|$, the number of vertices in 
$\G{}$, and its linear cost is 
%
%$\sum_{c\in M} c = \sum_{u\in V} |succ_\prec(u)| = |E|$.
%
\ltok{$\sum_{c\in M} c = \sum_{u\in V} |succ_\prec(u)| + w(u) = |E| + \sum_{u\in V} w(u)$}.
%
%plus the sum of costs.
%(or alternatively the sum of cost plus half the degrees of vertices).
\end{proof}

%\begin{remark}
Note that this allows us to talk about the \emph{linear cost} of a 
graph $\G{}$ as the linear cost of any multiset of costs, corresponding to any of its order.
%\end{remark}

\begin{property}
  \label{prop:optOrderW}
When there exists some $d\in\mathbb{N}$ such that $MC(\G{},\prec)$
contains only the values $d$ and $d+1$ then the order $\prec$ is
optimal.
\end{property}

\begin{proof}
Let us consider an order $\prec$ as described above:
it only contains $a$ times $d$ and $b$ times $d+1$ for some $d$,
and let us consider any optimal order $\prec'$ of $\G{}$.
Suppose that the multiset $MC(\G{},\prec')$ contains $e$ and $f$
such that $e<f-1$. Then replacing them by $e+1$ and $f-1$ reduces 
the cost because $(e^2+f^2)-\left((e+1)^2+(f-1)^2\right)=2(f-1-e)>0$. 
By iteratively applying this operation we end up with a multiset $M$ 
that has the same size and the same linear cost 
but a lower cost than $MC(\G{},\prec')$ and contains $a'$ times $d'$ and $b'$ times
$d'+1$ for some $d'$.

%\ltok{COM: à ce stade $d'$ n'est pas défini dans la preuve.}
%\lj{On est justement en train de le définir, j'ai précisé.}

Without loss of generality we can suppose that there is at least one $d$
in $MC(\G{},\prec)$ which means that $d$ is \ltok{the quotient of the Euclidean division of the linear cost by $ |V| $}.
%the size of  $MC(\G{},\prec)$. 
For the same reason, $d'$ is \ltok{the quotient of the Euclidean division of the linear cost of $MC(\G{},\prec')$  by $ |V| $}.
%the linear cost of $MC(\G{},\prec)$ divided by its size.
%
Because the linear cost of $MC(\G{},\prec)$ does not depend 
on $\prec$ this proves that $d'=d$
\ltok{as well as $ a=a'$ and $ b=b'$, which in turn implies that the costs of $MC(\G{},\prec)$ and $MC(\G{},\prec')$ are similar, and thus $\prec$ is optimal.}
%
%By using, once again, the fact that
%the linear cost and size are the same, we have that the two are the same
%which means that they have the same cost.
\end{proof}

While the property above is true for any graph (weighted or not) it is
not really useful for weightless graphs because, in a weightless graph,
the last vertex $u$ that we eliminate in the order $\prec$ has 
$|succ_\prec(u)|=0$, and more generally the vertex $u_i$ which is ordered
in the $i$-th position from the end,
has $|succ_\prec(u_i)| < i$. The following property handles this case:

\begin{property}
  \label{prop:optOrder}
  For a weightless graph, when there exists $d\in\mathbb{N}$ such that
  \mbox{$MC(\G{},\prec)$} contains all integers from $0$ to $d+1$ and
  at most once the integers $0$ to $d-1$, then $\prec$ is an optimal
  order.
\end{property}

\begin{proof}
The proof of optimality is similar to the proof of
property~\ref{prop:optOrderW}: when the property does not hold, 
we can find two elements $v_i$ and $v_j$ with $v_i+2 \leq v_j$ and
we can diminish the cost by setting $v_i=v_i+1$ and $v_j=v_j-1$.
\end{proof}

Finally, let us introduce the notion of \emph{marginal cost} for a multiset.

\begin{definition}[Marginal cost]
We introduce the \emph{marginal cost} to measure how much
a  multiset deviates from the optimal repartition (as given in 
property~\ref{prop:optOrderW}). Formally,  given a multiset $M$ of size $n$, we can
compute $d$ such that the linear cost of $M$ is  $n\times d + v$ 
where $1 \leq v\leq n$. 
The \emph{marginal cost} $c_m$ of $M$ is then:
$$c_m = \sum_{u\in M} \max\Big(0,u-(d+1)\Big)$$
%where $d+1$ is, as above, the linear sum divided by the number of elements rounded  up.  
\end{definition}

%\ltok{COM: Dans ce qui suit, je ne comprends pas si nous sommes dans le cas weightless ou non: Prop.3 correspond à weightless, mais la répartition optimale en $d$ et $d+1$ uniquement correspond à weighted.}
%\lj{On parle de multisets et plus de graphes donc peu importe.}

Note that we can equivalently define the marginal cost of $M$ as
$\sum_{u\in M} u  - \sum_{u\in M} min(d+1,u)$.
We know from property~\ref{prop:optOrderW} that the multiset $M'$ that 
minimizes the cost with the same linear cost and the same size only contains
$d$ and $d+1$ (with at least one $d+1$ since $v>0$). In other words the
marginal cost counts the number of elements larger than needed and 
how much they go over the average cost: if we have a $d+2$ it counts for
$1$, if we have a $d+5$ it counts for $4$, etc.

Note that the marginal cost cannot be used  directly to decide if an order is 
optimal.  Indeed, consider the two following multisets: $M$ composed of 
nine times the value 10 and one time the value 11 and $M'$ composed of 
nine times the values 11 and one time the value 2. They
have the same size, the same linear cost and the same marginal cost 
(which is $0$) but $M$ has a lower squared cost than $M'$.

The following property describes the minimal cost among
all the multisets with the same size, the same linear cost and the same 
marginal cost:
\begin{property}
\label{prop:marginal}
% \fl{harmoniser les notations avec ci-dessus, entre $|M|d+r$ et $nd+v$.}
% \lj{Ça me semble plus simple comme c'est actuellement mais si tu vois une bonne solution vas-y.}
Among all the multisets that have a size $n$, a 
linear cost of $d\times n+v$ with $1 \leq v \leq n$
and a marginal cost of at least $k$ (with $2k<v$),
then the ones achieving the minimal cost are
composed of $k$ times the value $d+2$, $v-2k$ times the
value $d+1$ and $(n-v+k)$ times the value $d$.
\end{property}

\begin{proof}
Take such a multiset $M$ with size $n$, linear cost of $d\times n+v$ and marginal cost of at least $k$, satisfying $2k<v$.
%
%\ltok{Note that $d+1$ is the average value of $M$ rounded up.}
%
Suppose that $M$ contains a value $d-i$ with $i>0$. Because  the linear cost of $M$ is strictly larger than $d\times n$ we can find at least one 
value  $d+j$ with  $j>0$ such that diminishing $d+j$ to $d+j-1$ keeps the 
marginal cost above $k$.  

\ltok{
Indeed, in a first case, there is at least one value $d+1$ in $M$, and diminishing this value to $d$ does not affect the marginal cost $c_m$.
In the other case, let us consider the sum $\sum_{u\in M} min(d+1,u)$, by definition of the marginal cost, there are no more than $c_m$ elements in $M$ larger than $d+1$.
All other elements are at most $d$ with at least one at $d-i<d$.
So, we have $\sum_{u\in M} min(d+1,u) < nd + c_m$ and using $c_m = \sum_{u\in M} u  - \sum_{u\in M} min(d+1,u)$, we find that
$$ c_m > nd + v - (nd + c_m) \Rightarrow 2c_m > v$$
and as we have made the assumption that $v > 2k$, we have $c_m > k$. Consequently, we can also find in this case a value $d+j$ with  $j>0$ such that diminishing $d+j$ to $d+j-1$ keeps the marginal cost above $k$.
}

%Such a $d+j$ can be found because
%either we have at least one value $d+1$ in $M$ (which does not  
%count in the marginal cost) or the marginal cost was strictly higher than $k$.
%Indeed, let us note $c$ for the marginal cost and consider the sum $\sum_{u\in M} 
%min(d+1,u)$. In this sum we know that there are at most $c$ elements 
%at $d+1$, and all other elements are at most $d$ with one at $d-i<d$. Thus 
%$\sum_{u\in M} min(d+1,u) < nd+c$. Now the marginal cost is equal  to 
%$\sum_{u\in M} u  - \sum_{u\in M} min(d+1,u) = nd + v - \sum_{u\in M} min(d+1,u) > nd+v - nd -c  = v-c$. 
%Overall we have $c  > v-c $ or, equivalently $2c>v$, but we supposed $2k< v$
%therefore $c>k$.

%In all cases we can form $M'$ from $M$ by replacing one $d-i$ with $d-i+1$ 
%and  $d+j$ by $d+j-1$. The cost of $M'$ is strictly lower than the cost 
%of $M$ while still having a marginal cost higher than $k$.

\ltok{
We can deduce from this observation that the multiset of size $n$, with a linear cost $d\times n+v$ with $1 \leq v \leq n$
and a marginal cost of at least $k$ (with $2k<v$) that achieves the minimal cost has $k$ times the value $d+2$, $v-2k$ times the value $d+1$ and $(n-v+k)$ times the value $d$.}
\end{proof}

This property can then be used to compare multisets and is summarized by
the following property:

\begin{property}
\label{prop:marginalOpt}
When $M$ has a marginal cost of $k$ then the cost of $M$ is at least $2k$ larger
than the balanced distribution (as given by property~\ref{prop:optOrderW}). 
This $2k$ bound is reached for the optimality criterion described in property~\ref{prop:marginal}.
\end{property}
\begin{proof}
As seen before the optimal can be reached by taking two values $i,j\in M$ with 
$i+2\leq j$ and changing them to $i+1$ and $j-1$. This balancing operation 
can reduce by at most $1$ the marginal cost but reduces the cost by $i^2+j^2-(i+1)^2-(j-1)^2=2(j-i)-2$ and since $j-i\geq 2$ this means the reduction
is at least $2$ and exactly $2$ when $i+2=j$. Since we need at least $k$ 
balancing operations to reach the optimal, this gives us at least a $2k$ reduction of the cost to reach the optimal.
Notice that when dealing with an  optimal multiset in the sense of property~\ref{prop:marginal} we only combine a $d$ with a $d+2$ which 
gives us the exact bound. Conversely if we are not in the case of 
\ref{prop:marginal} we will have to combine something below (or equal to) $d$ 
with something larger than $d+3$ or do a combination that does not diminish
the marginal cost (such as combining $d+1$ and $d+3$).
\end{proof}

\subsection{Reduction between \costProb{} and \curProb{}.}

Any instance of \curProb{} can be seen as an instance of \costProb{} where
the weights are set to 0. For that purpose, the idea is to take a vertex $u$ with some non null 
weight $w(u)$, link $u$ to $w(u)$ vertices $v_1, \dots, v_{w(u)}$ and make sure 
that we can guarantee that  $u$ appears before all the 
$v_1, \dots , v_{w(u)}$ in any optimal order.
We will thus exhibit a family of graphs to create such $v_i$ vertices before 
showing that these $v_i$ vertices can always appear after $u$ in the order. Finally we will  
prove the full reduction.

\subsubsection{The \texorpdfstring{$L_d$}{Ld} family of graphs.}

Let us consider the graph $L_d$ parameterized by $d\in\mathbb{N}$ that
contains a $(d+1)$-clique $ K^d$ composed of the vertices $K^d_0, \dots, K^d_d$,
one vertex $e_d$ that has $d$ neighbors $v^d_1 \dots v^d_d$ and such
that there is an edge between each $v^d_i$ and each vertex of $K^d$.
Consequently, there are three types of vertices in $L_d$: the vertex 
$e_d$, the vertices of type $V$ (the $(v^d_i)_i$) and the vertices of 
type $K$ (the $(K^d_i)_i$).
Here is a depiction of $L_d$: 

\vspace{5mm}
\begin{center}
\begin{tikzpicture}[->,>=stealth,shorten >=1pt,auto,node distance=2.8cm,semithick,scale=.8]

  \node (n) at (0,0) {$e_d$} ;

  \node (v1) at (2,1.5) {$v^d_1$} ;
%   \node (v3) at (2,0) {$\vdots$} ;
  \node (v2) at (2,0.75) {$\vdots$} ;
  \node (v3) at (2,0) {$v^d_i$} ;
  \node (v4) at (2,-0.75) {$\vdots$} ;
  \node (v5) at (2,-1.5) {$v^d_d$} ;

  \draw[-] (n) -- (v1) ;
  \draw[-] (n) -- (v3) ;
  \draw[-] (n) -- (v5) ;

  \node (k2) at (9.5,2) {$K^d_0$} ;
  \node (k3) at (8,1) {$K^d_1$} ;
  \node (kj) at (9,0) {$\dots$} ;
  \node (k1) at (8,-1) {$K^d_j$} ;
  \node (k4) at (9.5,-2) {$K^d_d$} ;
  
  \draw[-] (k1) to [bend left=10] (v1) ;
  \draw[-] (k1) -- (v3) ;
  \draw[-] (k1) to [bend right=10] (v5) ;

  \draw[-] (k2) to [bend right=10] (v1) ;
  \draw[-] (k2) to [bend right=15] (v3) ;
  \draw[-] (k2) to [bend right=20] (v5) ;

  \draw[-] (k3) to [bend left=10] (v1) ;
  \draw[-] (k3) -- (v3) ;
  \draw[-] (k3) to [bend right=10] (v5) ;

  \draw[-] (k4) to [bend left=20] (v1) ;
  \draw[-] (k4) to [bend left=15] (v3) ;
  \draw[-] (k4) to [bend left=10] (v5) ;
  
  \draw[-,bend right] (k4) -- (k3) -- (k2) -- (k1) -- (k4) -- (k2) -- (k3) -- (k1);
  
\end{tikzpicture}
\end{center}
\vspace{5mm}

\paragraph{Best cost $C_d$ for $L_d$.}

In the weightless case,
the best cost $C_d$ for $L_d$ is induced by the order that starts with
$e_d$ followed by the $v_i^d$ nodes and finally by the $K^d_i$ nodes.
Indeed, in that case the cost is $d^2$ for $e_d$, $(d+1)^2$ for each
$v^d_i$ and $i^2$ for $K^d_i$ (supposing we start with $K^d_d$ and
end with $K^d_{0}$).
This is optimal by virtue of property~\ref{prop:optOrder}.

\paragraph{Best cost for $L_d$ with a weight $1$ on $e_d$.}

If we add a weight 1 on $e_d$ then the best cost can be achieved
with the same order but this time the cost of $e_d$ is increased from
$d^2$ to $(d+1)^2$ which means an increase of $2d+1$.
In other words, in that case, the best cost is $C_d +2d+1 $.
Note that here, the optimality cannot be deduced directly from
property~\ref{prop:optOrder} as the property only applies to
weightless graphs. However we prove that there is an optimal order
starting with $e_d$.
%\fl{j'ai peut-être mal compris, mais je croyais que la propriété 4 montrait justement ça, et permettait de se passer de toutes les explications ci-dessous jusqu'à Partitioned graph ... ?}

For that, consider any order $\prec$ and let us show that $\prec$ can
always can be improved to an order that places the vertex $e_d$ in first position.

% \cm{ci-dessus on parle de sommets de type $K$, $V$ et du sommet $e_d$,
% et ci-dessous on a les types $K$, $V$, et un sommet de type $n$.
% Il s'agit bien du sommet $e_d$ ? Si oui le dire et /ou remplacer $N$ par $e$ dans la suite ?}
% \lj{J'ai corrigé.}

The order $\prec$ ranks three types of vertices: $e_d$, $V$ nodes and $K$ nodes, according to the description above.
Let us first suppose that there is a vertex of type $K$ before a vertex of type
$V$ before the vertex $e_d$. In that case the first $i$ vertices are
of type $V$ (we can have $i=0$), then we have $j+1$ vertices of type $K$
and then one vertex of type $V$. Let us consider how the cost changes by
exchanging this last $K$ with this last $V$, i.e., to change from
$V^iK^{j}KV$ to $V^iK^j V K$. It is clear that the cost changes only for the
exchanged $V$ and $K$. Before the exchange the cost of $V$ was
$(d-j)^2$ and after it is $(d-j+1)^2$ whereas for $K$ it was
$(2d-i-j)^2$ and after it is $(2d-i-j-1)^2$.
Overall, if $\Delta C$ is the difference between the cost before and the cost after the exchange, we have:
\begin{align*}
\Delta C & = (2d-i-j)^2 -(2d-i-j-1)^2 \\
& \ \ \ \ + (d-j)^2-(d-j+1)^2\\
& = 2d-2i-4\\
& = 2 (d-i-2)
\end{align*}

Therefore, unless $i+1=d$, the cost decreases which means
that we can always move the $V$ vertices at the beginning except for
maybe one to improve the cost of the order. 
In the end we have that the beginning of an optimal
sequence can be restricted to the form $V^iK^l e_d$ or $V^{d-1}K^lV e_d$. 
In the first case, transforming $V^iK^l e_d$ into $e_d V^iK^l$ decreases the score by $(i^2+i)$.
In the second case, transforming $V^{d-1}K^lV e_d$ into $e_d V^{d-1}K^lV$ decreases the score by $d^2+d-2l$ (which is $\geq 0$ 
because $l\leq d$).
Thus, we can move in all cases $e_d$ at the beginning of the order to improve the related cost.

We have proved that the best cost can be achieved by placing $e_d$ at the beginning of the order.
As the best cost $C_d$ for the rest of the order is unaffected by the cost of the elimination of $e_d$ first,
we have that the best cost is $C_d+2d+1$.

%as the best cost for $L_d$ can be achieved by eliminating $e_d$ first. Placing a weight
%on $e_d$ has no effect on the rest of the graph once it is eliminated 
%and since $e_d$ can be placed first in both cases.

\subsubsection{Partitioned graphs.}

Let us consider a graph $\G{}$ composed of two subgraphs
$\G{}_1$ and $\G{}_2$ plus exactly one edge
$\{e_1,e_2\}$ with $e_1\in V_1$ and $e_2\in V_2$. Any order $\prec$ on
$V$ induces an order on $V_1$, an order on $V_2$ and an order between
$e_1$ and $e_2$. 
If another order $\prec'$ induces the same order on
$V_1$, the same order on $V_2$ and the same order between $e_1$ and
$e_2$, it has the same cost as $\prec$. 
Therefore, an optimal
order for $G$ can be seen as either an optimal for $\G_1$ and an
optimal order for $\G_2$ where we add a weight of 1 on $e_2$ (if $e_2$ precedes $e_1$), or an
optimal order for $\G_2$ and an optimal order for $\G_1$ where we
add a weight of 1 on $e_1$ (if $e_1$ precedes $e_2$).

As a result, we obtain the following property:

\begin{property}
  \label{prop:split}
  If adding a weight 1 on $e_1$ in $\G_1$ increases the best cost of 
  $\G_1$ by $x$ and if adding a weight 1 on $e_2$ increases the 
  best cost of $\G_2$ by at most $x$, then the best cost of $\G{}$ is
  equal to the best cost of $\G_1$ plus the best of $\G_2$ where we add
  a weight of $1$ on $e_2$.
\end{property}

\subsubsection{Finishing the reduction.}
\begin{property}
  Let $(\G{},K)$ be an instance of the weighted problem, we can
  compute an equivalent instance of the weightless problem in
  a time polynomial in the number of edges and vertices in $\G$ 
  plus the sum of weights in $\G$.
\end{property}

\begin{proof}
If all the weights in $\G{}$ are zeros, the result is immediate.
Let us suppose that there is a vertex $u$ with a
weight $w(u)>0$ and a degree $d-w(u)$. Let us consider 
the graph $\G{}'$ composed of  $\G{}$ but where the weight 
of $u$ is reduced by 1 plus a fresh  copy of $L_d$  and an
edge between $u$ and $e_d$. 
We claim that the best cost of $\G{}'$ is lower than $K+C_d$ if 
and only if the best cost of $\G{}$ is lower than $K$.

Indeed, we have shown that the graph $L_d$ is such that adding a weight 1 on
$e_d$ increases the best cost from $C_d$ to $C_d+2d+1$. We also know
that the sum of the degree of $u$ plus its cost is $d$ therefore for
any order $\prec$ adding a weight 1 on $u$ increases the cost of $\prec$ 
by at most $2d+1$. 
\ltok{By applying property~\ref{prop:split} where $\G{}$ has the role of $\G_2$ ($u$ is $e_2$) and $L_d$ of $\G_1$ ($e_d$ is $e_1$) and $ x=2d+1$,
we obtain that the best cost of $\G{}'$ is the best cost of $\G{}$ where node $u$ has a weight increased by 1.
In other words, it is equivalent in terms of best cost to handle the graph $\G{}$ or to handle the graph $\G{}'$ where the weight of node $u$ has been decreased by 1 unit. 
}
%
%we obtain the expected result.

By applying $\sum_u w(u)$ times this property we obtain \ltok{an instance $(\G{}',K')$ of the weightless problem}
which is equivalent to the \ltok{instance of the weighted problem} $(\G{},K)$. This new instance
has $\sum_u w(u)\times |L_{deg(u)+w(u)}|$ more vertices than the original one,
but this is still polynomial in the size of $\G{}$ plus the sum of 
weights and the  resulting instance can be computed in polynomial time.
\end{proof}

This proves that if the \costProb{} problem is strongly NP-hard, the
\curProb{} problem is also NP-hard.

\subsection{Reduction between the \costProb{} and Set Cover.}

Our reduction for the weighted case will be a \emph{strong reduction},
meaning the version of the problem where the weights are polynomial in 
the size of the graph is still NP-hard.
It will be based on the \emph{Set cover problem}. We recall
here the definition of this problem and invite the reader to check the
literature for a proof of its NP-completeness:

% \cm{ci-dessous on a des $k$ et des $K$, et des $n$ et des $N$.
% C'est bien les mêmes variables ?}
% \lj{J'ai fixé}

\begin{problem}[Set cover]
  Given two integers $n,k$,
  \ltok{we denote $U$ the set of elements $\{1,\ldots,n\}$.
  Let $P$ be a set of sets of elements of $U$, does there exist a subset $P'\subset P$ of size $k$ such that 
  $\displaystyle{\cup_{S\in P'} S = U}$}?
  %$\displaystyle{\cup_{S\in P'} S = \llbracket 1, n\rrbracket}$?}
\end{problem}

%\fl{est-ce qu'on peut définir Set Cover ici, au lieu de le faire au début ? car entre temps on a oublié les notations en question ...}

Let us fix an instance $(P,n,k)$ of the Set Cover problem asking whether we can find $k$ sets $S_1, \dots, S_k$ in $P$ such that $S_1
\cup \dots \cup S_k = \ltok{U}$. We suppose, without loss of
generality, that the instance is not trivial in the sense that $|P|
\geq k$ (there are at least $k$ sets in $P$), $\cup_{S\in P} S = \ltok{U}$ 
(each integer in $\ltok{U}$ is contained in at least one $S\in P$) 
and that all sets $S\in P$ are such that
$S\subseteq \ltok{U}$.

Let us exhibit a weighted graph $\G$ and a value $V$ such that best 
cost for $\G$ is less than $V$ if and only if $\ltok{\{1,\ldots,n\}}$ can be covered 
with $k$ sets from $P$.

%\paragraph{Types of reductions used.}
%Our reduction will be a one-to-one polynomial reduction meaning that
%for each $(P,n,k)$ we will provide a graph $\G{}$ of polynomial
%size (and computable in polynomial time) and a value $V$ such that the
%best cost for $\G{}$ is under $V$ if and only if $\llbracket 1, n\rrbracket$ can be
%covered with $k$ sets from $P$.

\subsubsection{\ltok{Construction of a \costProb{} instance from a Set Cover instance.}}

Our reduction will provide a graph $\G{}$ with a weight function $w$ 
depending on a parameter $d$ such that the Set Cover instance has a solution
if and only if the best order has a multiset of costs containing 
at most $k$ values $d+2$ and all other values are either $d$ or $d+1$ (we will 
explicit later the values of $V$ and $d$). 

\paragraph{Vertices of $\G{}$.}
In $\G$ the vertices are: a special vertex $A$, $n$ vertices $e_1, \dots, e_n$ one for
each $i\in \ltok{\{1,\ldots,n\}}$, $\ell$ vertices, $s_1, \dots, s_\ell $ with one vertex $s_j$ for each set
$S_j\in P$ and finally three vertices $a^i_j,b^i_j,c^i_j$ for each $i\in S_j$. 

\paragraph{Edges of $\G{}$.} 

The vertex $A$ has an edge with all vertices of the form $s_j$ or $e_i$.
For a pair $(i,j)$ with $i\in S_j$, both $a^i_j$ and $b^i_j$ have an edge with $s_j$ and $c^i_j$; in turn $c^i_j$ has an edge with $e_i$.

Overall the graph $G$ looks like this:  
%\fl{j'ai réduit un peu la figure, hésite pas à re modifier}

% \begin{center}
% \begin{tikzpicture}[->,>=stealth,shorten >=1pt,auto,node distance=2.8cm,semithick]

%   \node (A) at (2.5,5) {$A$} ;

%   \node (s1) at (1,4) {$s_1$} ;
%   \node (s2) at (2,4) {$\dots$} ;
%   \node (s3) at (3,4) {$s_j$} ;
%   \node (s4) at (4,4) {$\dots$} ;
%   \node (s5) at (5,4) {$s_\ell$} ;

%   \draw[-] (A) -- (s1) ;
%   \draw[-] (A) -- (s2) ;
%   \draw[-] (A) -- (s3) ;
%   \draw[-] (A) -- (s4) ;
%   \draw[-] (A) -- (s5) ;

%   \node (a1) at (2.5,3) {$a^i_j$} ;
%   \node (b1) at (3.5,3) {$b^i_j$} ;
%   \node (c1) at (3,2) {$c^i_j$} ;
  
%   \node (f1) at (1, 2.5) {$\dots$};
%   \node (fn) at (5, 2.5) {$\dots$};
  
%   \draw[-] (s3) -- (a1) ;
%   \draw[-] (s3) -- (b1) ;
%   \draw[-] (a1) -- (c1) ;
%   \draw[-] (b1) -- (c1) ;
  
%   \node (n1) at (1,1) {$e_1$} ;
%   \node (n2) at (2,1) {$\dots$} ;
%   \node (n3) at (3,1) {$e_i$} ;
%   \node (n4) at (4,1) {$\dots$} ;
%   \node (n5) at (5,1) {$e_n$} ;
  
%   \draw[-] (c1) -- (n3) ;

%   \draw[-] (n1) to [bend left=90] (A) ;
%   %\draw[-] (n2) to [bend left=120] (A) ;
%   \draw[-] (n3) to [out=-90, in=-90] (-1,3) to [out=90, in= 90] (A) ;
%   %\draw[-] (n4) to [bend left] (A) ;
%   \draw[-] (n5) to [out=25, in=-90] (6,3) to [out=90, in=25] (A) ;
% \end{tikzpicture}
% \end{center}

\begin{center}
\begin{tikzpicture}[->,>=stealth,shorten >=1pt,auto,node distance=2.8cm,semithick,scale=.9]

  \node (A) at (-1,2.5) {$A$} ;

  \node (s1) at (1,4) {$s_1$} ;
  \node (s2) at (2,4) {$\dots$} ;
  \node (s3) at (3,4) {$s_j$} ;
  \node (s4) at (4,4) {$\dots$} ;
  \node (s5) at (5,4) {$s_\ell$} ;

  \draw[-] (A) to [bend left] (s1) ;
%   \draw[-] (A) to [bend left=45] (s2) ;
  \draw[-] (A) to [bend left=60] (s3) ;
%   \draw[-] (A) to [bend left=75] (s4) ;
  \draw[-] (A) to [bend left=75] (s5) ;

  \node (a1) at (2,3) {$a^i_j$} ;
  \node (b1) at (4,3) {$b^i_j$} ;
  \node (c1) at (3,2) {$c^i_j$} ;
  
  \node (f1) at (1, 2.5) {$\vdots$};
  \node (fn) at (5, 2.5) {$\vdots$};
  
  \draw[-] (s3) to [bend left=20] (a1) ;
  \draw[-] (s3) to [bend right=20] (b1) ;
  \draw[-] (a1) to [bend left=20] (c1) ;
  \draw[-] (b1) to [bend right=20] (c1) ;
  
  \node (n1) at (1,1) {$e_1$} ;
  \node (n2) at (2,1) {$\dots$} ;
  \node (n3) at (3,1) {$e_i$} ;
  \node (n4) at (4,1) {$\dots$} ;
  \node (n5) at (5,1) {$e_n$} ;
  
  \draw[-] (c1) -- (n3) ;

  \draw[-] (n1) to [bend left] (A) ;
%   \draw[-] (n2) to [bend left=45] (A) ;
  \draw[-] (n3) to [bend left=60] (A) ;
%   \draw[-] (n4) to [bend left=75] (A) ;
  \draw[-] (n5) to [bend left=75] (A) ;

\end{tikzpicture}
\end{center}

\paragraph{Weights of vertices in $\G{}$.}

Recall that the cost of a vertex is the sum of its degree and its weight.
In $G$, we set the weights so that each vertex has a cost of $d+2$, except for the $c^i_j$ which have a cost of $d+3$ and the vertex $A$ which has a cost of $d+1+n+k$.  
Parameter $d$ needs to be large enough so that all weights are positive
%
% vérifier: (for this $d>k$ and $d>2\times |S_j|$ for all $S_j\in P$ suffices).
%
\ltok{This is not constraining for vertices $a^i_j$, $b^i_j$ and $c^i_j$.
Vertices $s_j$ have degree 1 plus twice the number of $i$ appearing in set $S_j$, i.e., $1+2 \times |S_j|$, so it suffices that $d>2\times |S_j|$ for all $S_j\in P$.
Vertices $e_i$ have degree 1 plus the number of $S_j$ sets where $i$ can be found, which is at most $1+\ell$.
Vertex $A$ has degree $ \ell + n$.
Having the additional condition $ d>\ell$ is sufficient to guarantee the constraint on $A$ and $e_i$ vertices.}

\paragraph{Value of $V$.}

As we will show, when there is a Set Cover with $k$ sets then 
we have an order $\prec$ for $\G$  such that $MC(\G,\prec)$ contains $k$ 
times the value $d+2$ (corresponding to the $k$ selected sets),  $\sum_{S\in P} |S| -  n$ times the value $d$ and 
all the other values are $d+1$.
It implies that the cost  $V=k (d+2)^2 + (\sum_{S\in P} |S| -  n) d^2 + r (d+1)^2$,
where $r$ is the number of vertices in $\G$ minus $k$ and minus
$(\sum_{S\in P} |S| -  n)$.

Note that, per property~\ref{prop:marginal}, this value $V$ corresponds to
the minimal cost for an order that has a marginal cost of $k$.
Conversely, we will show that if there is a solution with a marginal
cost of $k$ or less then there is a Set Cover with $k$ sets, proving that 
it is a reduction.
Note that this converse direction is stronger than what is needed as there
exists multisets with a marginal cost of $k$ that do not match the minimal 
cost.

\ltok{
The general intuition underlying the equivalence between a solution (if any) of the Set Cover problem and a solution of the corresponding \costProb{} problem  is the following.
The first $k$ vertices $s_j$ selected in the elimination order correspond to the $S_j$ sets that cover $U$.
Indeed, each of these vertices generate exactly a marginal cost of 1 and all other nodes according to the elimination order will not generate any marginal cost if we can eliminate all $e_i$ nodes without adding any marginal cost.
This condition is met if deleting the $k$ first $s_j$ nodes allows to decrease the cost of all $e_i$ nodes by (at least) 1 unit, which means that we have deleted at least one triplet $a^i_j$, $b^i_j$, $c^i_j$ related to node $e_i$.  
If so, we have found an elimination order with cost $V$ as well as $k$ sets $S_1 \ldots S_k \in P$ which cover $U$. 
}

\subsubsection{Proof that a solution to Set Cover implies a solution to \curProb{}.}

Suppose that we have a solution to Set Cover with the sets 
$S_{j_1}, \dots, S_{j_k}$. Let us prove that our graph $\G{}$
has an elimination order where the cost of each vertex  is $d$ or $d+1$
or $d+2$ but with only $k$ vertices with cost $d+2$.

The elimination order can be built by having $j$ going through $j_1,\dots,j_k$.
For each $j$ value, we eliminate first $s_{j}$ for a cost of $d+2$,
then we go through $i \in S_j$ and eliminate the corresponding $a_j^i$ and $b_j^i$ vertices
(both at cost $d+1$ once $s_j$ has been removed). Then we eliminate
$c_j^i$ (for a cost of $d$ if $e_i$ is already eliminated and $d+1$  otherwise).
Finally, if $e_i$ has not yet been eliminated by a previous $j$ value, we eliminate it for a cost of $d+1$. 

Once we have done this, the vertex $A$ has lost $k+n$ neighbors: all the $e_i$
and the $k$ vertices $s_j$ that we have selected. Its remaining cost is $d+1$ so we
eliminate it, which in turn means that all the remaining $s_j$ have a cost of $d+1$
and we can eliminate them all (with their $a^i_j$, $b^i_j$ and $c^i_j$ attached).

Overall the cost of this elimination order is exactly $V$.

\subsubsection{Proof that a solution to \costProb{} implies a solution to Set Cover.}

Suppose that we have an order $\prec$ such that the total cost is below $V$. Since $V$ is the optimal cost for a marginal cost of $k$, the order $\prec$ cannot have a marginal cost higher than $k$ otherwise its cost would be higher than $V$ (see 
property~\ref{prop:marginalOpt}).
%as  otherwise it means that its cost is $2(k'-k)$ higher than $V$ .
Knowing that $\prec$ has a marginal cost of at most $k$, we will extract a solution to 
the corresponding Set Cover instance.

First we notice that when $A$ is eliminated, its cost is 
$d+1+k-E_s+E_n+R_n$ where $E_s$ the number of $s_i$ eliminated,
and $R_n$ is the number of $e_i$ remaining. However, as long as $A$ 
is not  eliminated, $E_s$ is less than (or equal to) the marginal cost of all
the vertices eliminated before $A$. Indeed, if $s_j$ is eliminated while $A$
is still present it is because we have paid a marginal cost at least $1$ to
eliminate directly one of $s_j$ or $a^i_j$ or $b^i_j$ or $c^i_j$ for
$i\in S_j$.
That is true because if  $A$ is present, then all those vertices have a cost
of $d+2$ except $c^i_j$ that has a cost $d+3$ or $d+2$ depending on
whether $e_i$ is eliminated or not yet.

Overall when we eliminate $A$, we pay a marginal cost of $(k-E_s)+R_n$ 
where $E_s$  is the number of sets eliminated and $R_n$ is the number of 
integers not  yet eliminated. The marginal cost of the order $\prec$ is at
least the marginal cost of all vertices eliminated before $A$ plus the
marginal cost for $A$. Because the marginal cost of vertices removed before
$A$ is at least $E_s$, by adding the marginal cost of $A$ we get a marginal
cost larger than $E_s+ (k-E_s)+R_n = k+R_n$  which can be equal to $k$ only
if $R_n=0$ which means that all vertices $e_i$ corresponding to integers \ltok{$\{1 \ldots n \}$} have been
eliminated. 
\ltok{Note that if a vertex $e_i$ is  directly eliminated without eliminating first a vertex $s_j$ and a triplet $a^i_j$, $b^i_j$, $c^i_j$ then   we have to add a marginal cost of $1$ specifically for this vertex $e_i$.}
%
%Note that a vertex $e_i$ can be eliminated because
%$i\in S_j$ for a vertex $s_j$ eliminated or because $e_i$ is  directly eliminated but in that case we have to add a marginal cost of $1$ 
%specifically for this vertex.
%
But in that case, it means that the marginal cost of all vertices before $A$ 
includes the cost of removing this $e_i $ which means that we cannot have 
an overall marginal cost of $k$. Combining everything we get that if we have
an order that has a marginal 
cost of $k$ and \ltok{thus a cost of at most $V$}, then we have $k$ sets  $S_{i_1}, \dots, S_{i_k}$ covering all 
integers in $\ltok{U}$.

\end{document}